\renewcommand{\Pr}{\mathbb{P}}
\newcommand{\Exp}{\mathbb{E}}
\newcommand{\ExpCond}[2]{\Exp\left(#1 \mid #2\right)}
\newcommand{\inMIS}{\texttt{in-MIS}}
\newcommand{\notInMIS}{\texttt{out-MIS}}
\newcommand{\transmit}{\texttt{transmit}}
\newcommand{\listen}{\texttt{listen}}
\newcommand{\sleep}{\texttt{sleep}}
\newcommand{\undecided}{\texttt{undecided}}
\newcommand{\lose}{\texttt{lose}}
\newcommand{\win}{\texttt{win}}
\newcommand{\commit}{\texttt{commit}}
\newcommand{\terminate}{\texttt{Terminate}}
\newcommand{\Prob}[1]{\Pr\left(#1\right)}
\newcommand{\DeltaEst}{\Delta_{est}}
\newcommand{\peterAlg}{$\Call{LowDegreeMIS}{}$}
\newcommand{\RecBackoff}{Rec-EBackoff}
\newcommand{\SndBackoff}{Snd-EBackoff}
\DeclareMathOperator{\poly}{poly}
\title{Energy-Efficient Maximal Independent Sets in Radio Networks}
\date{}
\author[1]{Dominick Banasik}
\author[2]{Varsha Dani}
\author[3]{Fabien Dufoulon}
\author[4]{Aayush Gupta}
\author[5]{Thomas P. Hayes}
\author[6]{Gopal Pandurangan\thanks{Supported in part by  NSF grant CCF-2402837 and ARO grant W911NF-231-0191.}}
\affil[1,2]{\small Rochester Institute of Technology, Rochester, NY, USA\\
\texttt{db5335@rit.edu}, \texttt{varsha.dani@rit.edu}}
\affil[3]{\small Lancaster University, Lancaster, UK\\
\texttt{f.dufoulon@lancaster.ac.uk}}
\affil[4,6]{\small University of Houston, Houston, TX, USA\\
\texttt{agupta56@cougarnet.uh.edu}, \texttt{gopal@cs.uh.edu}}
\affil[5]{\small University at Buffalo, Buffalo, NY, USA\\
\texttt{thayes2@buffalo.edu}}
\begin{document}

\maketitle

\begin{abstract}
The maximal independent set (MIS) is one of the most fundamental problems in distributed computing, and it has been studied intensively for over four decades.
This paper focuses on the MIS problem in the \emph{radio network} model, a standard model widely used to model \emph{wireless} networks, particularly ad hoc wireless and sensor networks. 
Energy is a premium resource in these networks, which are typically battery-powered. Hence, designing distributed algorithms that use as little energy as possible is crucial. We use the well-established energy model where a node can be \emph{sleeping}
or \emph{awake} in a round, and only the awake rounds (when it can send or listen)  determine the \emph{energy complexity} of the algorithm, which we want to minimize.

We present new, more energy-efficient MIS algorithms in radio networks with {\em arbitrary} and unknown graph topology. We present algorithms for two popular variants of the radio model ---  with collision detection  (CD) and without collision detection (no-CD). Specifically, we obtain the following results:
\begin{enumerate}
    \item \emph{CD model}: We present a randomized distributed MIS algorithm with energy complexity $O(\log n)$, round complexity $O(\log^2 n)$, and failure probability $1 / \poly(n)$, where $n$ is the network size. We show that our energy complexity is optimal by showing a matching $\Omega(\log n)$ lower bound.
    \item \emph{no-CD model}: In the more challenging no-CD model, we present a randomized distributed MIS algorithm with energy complexity $O(\log^2n \log \log n)$, round complexity $O(\log^3 n \log \Delta)$, and failure probability $1 / \poly(n)$. The energy complexity of our algorithm is significantly lower than
    the round (and energy) complexity of $O(\log^3 n)$ of the best known distributed MIS algorithm of Davies [PODC 2023] for arbitrary graph topology.  
\end{enumerate}
\end{abstract}

\smallskip


\setcounter{page}{1}

\section{Introduction}

The Maximal Independent Set (MIS) problem is a fundamental problem in graph theory and distributed computing, with numerous applications in network design, resource allocation, and parallel computing. It is also one of the best-studied symmetry-breaking problems in distributed networks.  In particular, solving the MIS problem efficiently is crucial in radio networks due to the inherent challenges posed by \emph{wireless} communication, such as contention and collision. 
For example, in the popular ad hoc wireless and sensor networks, nodes are deployed with no infrastructure; in fact, nodes may not even know which nodes are close (i.e., neighbors). Unlike wired networks, nodes cannot broadcast at will to discover their neighbors; radio interference and collisions make it unlikely for such uncoordinated communications to reliably transmit any information.
To coordinate communication, one can first construct an MIS, then use it as a building block for setting up a \emph{communication backbone}.
Such communication applications of MIS have been studied extensively in ad hoc wireless and sensor networks (see e.g.,~\cite{wagner}).
However, this leads to the problem of \emph{first constructing an MIS} starting with no underlying knowledge of the neighborhood or topology in radio networks, which is the focus of this paper.

Another major issue in radio networks, such as ad hoc sensor networks, is that nodes are typically battery-powered and hence energy-constrained. Most of the energy consumption of the nodes is when they are transmitting or listening. On the other hand, very little energy is consumed when the nodes are {\em sleeping}, i.e., when the radio devices are switched off; in such a state, a node does not send or listen (but messages sent to it are lost)~\cite{energy1,podc2020}.  This necessitates the design of \emph{energy-efficient} distributed algorithms where nodes try to minimize the number of rounds they send or listen in.

In this paper, we focus on designing \emph{energy-efficient} distributed algorithms for MIS in radio networks with \emph{arbitrary (and unknown)} topology. Distributed algorithms (and lower bounds) for constructing MIS in radio networks have been studied extensively for many years (cf. Section \ref{sec:related}).  Almost all of the algorithms studied, except the recent algorithm of Davies~\cite{davies2023}, assumed the underlying graph to be of a particular type, such as unit disk~\cite{moscibroda-wattenhoffer-podc2005},  growth-bounded~\cite{wattenhofer-growthbounded}, or more generally, bounded independence~\cite{daum-podc13, daum-kuhn}. Furthermore, all these works focused on minimizing the \emph{round complexity}. 
In particular, the work of Davies~\cite{davies2023} --- which is most relevant to this work --- gave an $O(\log^3 n)$ round distributed MIS algorithm for radio networks that works for an arbitrary (and unknown) underlying topology (throughout, $n$ denotes the number of nodes in the network). This is the best known round-efficient MIS algorithm for radio networks that works
under an arbitrary topology. However, none of the above prior works focused on designing energy-efficient algorithms in radio networks, and  theier energy complexities 
can be {\em as high} as their respective round complexities. 

\subsection{Radio Network Model and Energy Complexity} \label{sec:model}
Radio networks are characterized by their broadcast communication model, where nodes communicate by transmitting messages over shared channels. In this model, a message sent by a node can be received by all its neighbors within the transmission range. However, if multiple nodes transmit simultaneously, \emph{collisions} occur, leading to communication failures. This necessitates the development of robust distributed algorithms that can effectively handle such collisions and ensure reliable communication.

We assume we have an underlying communication network modeled by an {\em arbitrary} graph $G$. Each node in $G$ has a transmitter
and receiver to communicate with other nodes. There is an edge between two nodes in $G$
if they are within each other's transmission range. We note that the graph $G$
is \emph{unknown} to all the nodes.   In particular, we will assume that nodes do not even know who their
neighbors are in the graph until they have explicitly heard from them during the algorithm. This is sometimes called the \emph{ad hoc} model~\cite{davies2023}.  

We assume that the nodes start out with shared knowledge of a value $n$, which is an upper bound on the number of nodes in the network;
our bounds require only an estimate of $n$ within a polynomial factor.\footnote{Although we informally think of $n$ as being equal to the actual number of nodes in $G$, this is not at all necessary; the only disadvantage
to having $n$ much larger than the actual number of nodes is a possible increase in the time and energy complexities, and message lengths, of our algorithms.}  Similarly, we shall assume that the nodes have shared knowledge of a value $\Delta$ which is an upper bound on the maximum degree of the graph.  In settings where $\Delta$ is not given, our algorithms may still be applied, using $n$ in place of $\Delta$; however, in this case, our energy and round complexity guarantees become worse.  A more sophisticated approach to use when $\Delta$ is not known, is to guess a series of increasing values for $\Delta$, running our algorithm for each guess.  When the guesses are too small, portions of the output may fail to be independent, in which case affected vertices must detect this fact, and repeat the algorithm with the next, larger, value for $\Delta$.  The details are sufficiently complicated that we omit them from this version of the paper, mentioning only that using $2^{2^i}$ as the $i$th guess for $\Delta$ seems to work well, and carries an $O(\log \log n)$ factor overhead for our energy complexity, and an $O(1)$ factor overhead for our round complexity.

We assume all nodes start in the same state (with possibly no predesignated IDs) but have access to private random bits. (This allows nodes to locally generate
$O(\log n)$-bit IDs which are unique with high probability.) Unless otherwise stated, we allow a failure probability of $\sfrac{1}{\poly(n)}$. 
We assume the standard RADIO-CONGEST model, which constrains the size of messages that can be sent in a single round, limiting them to $O(\log n)$ bits. This constraint reflects practical limitations in real-world networks, where bandwidth is limited, and large messages can cause congestion and delays. (The alternative model, where there is no such bandwidth constraint, is called the RADIO-LOCAL model.) 

We assume synchronous communication where time is divided into discrete \emph{rounds} (or timesteps). In each round, a processor can be in one of two states: \emph{awake} or \emph{sleeping}. In the awake state, a node can \emph{either}  \emph{transmit} or \emph{listen} (but not both due to the radio nature).
In the sleeping model~\cite{podc2020}, only the awake rounds are counted towards the \emph{energy complexity} (also called as
\emph{awake complexity} in some works --- cf. Section \ref{sec:related}). On the other hand, both sleeping and awake rounds
are counted towards the \emph{round complexity}.
We assume local computation (performed by a node in a round) is free, but all our algorithms do local computation at a very small cost (at most logarithmic in $n$). 

As in~\cite{davies2023,wattenhofer-growthbounded}, we assume \emph{synchronous wake-up}, i.e., all nodes wake up simultaneously and can execute the algorithm immediately. (We note that \emph{asynchronous wake-up} has also been studied in several prior works~\cite{moscibroda, jurdin, moscibroda-wattenhoffer-podc2005,daum-kuhn}).

A message broadcast by a node $u$  at time $t$ is received by a neighbor $v$ of $u$
 at round $t$ if:  (i) $v$ listens at time $t$ and (ii) no other neighbor of $v$  transmits at time $t$. 
If some other neighbor of $v$ also transmits at time $t$, $v$'s reception will depend on how collisions are handled.
Two standard and well-studied models of handling collisions are (i) with collision detection (CD) and (ii) without collision detection (no-CD) --- defined as follows.
\begin{itemize}
\item \textbf{Collision-Detection (CD) model:} In the CD  model, a listening node \emph{can} distinguish between silence (no neighbors sending) and a collision (more than one neighbor sending). Thus,
if more than one neighbor of $v$ transmits, then $v$ will hear a collision.
\item  \textbf{No Collision Detection (no-CD) model:} In the no-CD model a listening node \emph{cannot} distinguish between silence and collisions between \emph{two or more} messages.  Thus, if more than one neighbor
of $v$ transmits, then $v$ will hear nothing (i.e., silence).
\end{itemize}
 
The no-CD radio model is more challenging and has been used extensively in prior works (see e.g.,~\cite{energy4, yehuda89, davies2023, moscibroda-wattenhoffer-podc2005} and the references therein). We note that the CD model is also well-studied (see e.g.,~\cite{yehuda89,ghaffari-podc2013,wattenhofer-growthbounded} and the references therein) and closely related to the well-studied \emph{beeping model}~\cite{beeping}, see Section \ref{sec:related}. 
A no-CD algorithm will also work on the CD model, but may be less time- or energy-efficient than a respective CD algorithm. 

\subsection{Maximal Independent Set (MIS)}
 \emph{Maximal independent set} (\emph{MIS})  is one of the most well-studied problems in distributed graph algorithms.
Given a graph with $n$ nodes, each node must (irrevocably) commit to being in a subset $M \subseteq V$ (called the MIS) or not such that (i) every node is either in $M$ or has a neighbor in $M$ and (ii) no two nodes in $M$ are adjacent to each other.

The MIS problem has been studied extensively for the last four decades in several distributed computing models (see e.g.,~\cite{ghaffari-soda,podc2023,ghaffari-sleeping,ghaffari-podc2023,davies2023} and the references therein). 
In this paper, our focus is on algorithms for the \emph{radio network model} with low \emph{energy complexity}.

\subsection{Our Results}

We present new, more \emph{energy-efficient} MIS algorithms in radio networks on $n$ nodes with {\em arbitrary} and unknown graph topology.  Specifically, we obtain the following results:
\begin{enumerate}
    \item \textbf{Lower Bound:} We show a lower bound of $\Omega(\log n)$ on the energy complexity. This lower bound
    applies to both the CD and no-CD models. 
    \item \textbf{Energy-Optimal MIS in the CD model}: We present a randomized distributed MIS algorithm with energy complexity $O(\log n)$, round complexity $O(\log^2 n)$, and failure probability $\sfrac{1}{\poly(n)}$. Our algorithm is energy-optimal
    because of the above lower bound.
    \item \textbf{Energy-Efficient MIS in the no-CD model}: In the more restrictive no-CD model, we present a randomized distributed MIS algorithm with energy complexity $O(\log^2n \log \log n)$, round complexity $O(\log^3 n \log \Delta)$, and failure probability $\sfrac{1} {\poly(n)}$.
    The energy complexity of our algorithm is significantly lower than
    the round (and energy) complexity of $O(\log^3 n)$ of the best known distributed MIS algorithm of Davies~\cite{davies2023} for arbitrary graph topology.
    Furthermore, our energy complexity essentially matches (up to a $\log \log n$ factor) the best known lower bound for \emph{round complexity}  $\Omega(\log^2 n)$~\cite{davies2023, farach-colton} in the no-CD model.
\end{enumerate}

   Our algorithms perform only unary communication, i.e., nodes only transmit a ``1'' bit if they transmit at all. In particular,
   our CD algorithm will also work in the simpler {\em beeping} model with the same energy and round complexities (cf. Section \ref{sec:cddesc}).

Our algorithms are an energy-efficient implementation of a Luby-like algorithm~\cite{luby1985simple,metivier} for radio networks.
A somewhat straightforward implementation of Luby for radio networks will take $O(\log^2 n)$ energy and rounds in the CD model and $O(\log^4 n)$ energy and rounds in the no-CD model.
In particular, in the more challenging no-CD model, it is non-trivial to improve the round complexity to $O(\log^3 n)$ rounds as was done in~\cite{davies2023}; they do so by using
an efficient radio implementation of Ghaffari's algorithm~\cite{ghaffari-soda} for CONGEST (wired) networks. However, the algorithm of~\cite{davies2023} also has $O(\log^3 n)$ energy complexity, as some nodes can be awake for so many rounds.

Our approach uses several non-trivial tools to implement a Luby-like algorithm energy-efficiently in a radio network --- improving
the above $O(\log^3 n)$ bound (in the no-CD model) by almost a logarithmic factor for energy complexity. 

\subsection{Additional Related Work}\label{sec:related}

The literature on the MIS problem is vast. We focus mainly on those relevant to this work, i.e., \emph{radio network model} and \emph{energy-efficient} algorithms for \emph{MIS}.

As mentioned earlier, almost all of the prior works focused on improving the \emph{round complexity} of MIS in the radio model, and these focused on special families of graphs such as unit disk graphs~\cite{moscibroda} or bounded independence graphs~\cite{daum-podc13,daum-kuhn}. These algorithms typically assume the \emph{no-CD} model and asynchronous wake-up and run in $O(\log^2 n)$  rounds. This bound can be improved for \emph{multi-channel networks}~\cite{daum-kuhn}, but the $O(\log^2 n)$ bound applies for standard single-channel networks (as assumed in this paper and many others cited above). It can be shown that $\Omega(\log^2 n)$ is a lower bound for the round complexity~\cite{davies2023,farach-colton}. 
For arbitrary graph topology, the best known complexity bound is $O(\log^3 n)$~\cite{davies2023}; this result, like ours, assumes synchronous wake-up. As pointed out in~\cite{davies2023}, the $\Omega(\log^2 n)$ round complexity lower bound applies to synchronous wake-up as well. In the CD model, the work of~\cite{wattenhofer-growthbounded} showed a tight bound
of $\Theta(\log n)$ on the \emph{round complexity} of MIS  in
\emph{growth-bounded} graphs. Note that our tight $\Theta(\log n)$ energy bound in the CD model applies to arbitrary graphs.

There also has been extensive work on the so-called \emph{beeping} model~\cite{beeping, jeavons} where nodes can only communicate by beeping, which is equivalent to transmitting (or not transmitting) a bit (i.e., unary communication). Collision detection is assumed in the sense that if more than one neighbor of a node (say $v$) beeps, then $v$ hears at least one beep. This is similar to the CD radio model, except in two ways: (i) In the radio model a node can
send $O(\log n)$-sized message in a round, but a beep contains no information (except its presence).
(ii) In the beeping model, the best known MIS algorithms typically assume \emph{sender-side} collision detection, see e.g., the work
of~\cite{jeavons} which gives an optimal $O(\log n)$-round MIS algorithm in the beeping model. Sender-side collision detection means that the sender can detect a beep from its neighbors even when the sender is transmitting a beep. In the radio model, sender-side CD is not assumed --- a node can only send or receive in any round, if they do both, then they will not hear anything. 

The {\em sleeping} (or energy) model has been studied extensively in recent years.
As mentioned in Section \ref{sec:model}, the main feature of this model is that a node can be either in the sleeping or awake state 
in any round, and only rounds spent in the awake states are counted towards the energy (also called as \emph{awake}) complexity.
The energy model for \emph{radio networks} (also called SLEEPING-RADIO~\cite{podc2023}) used in this paper was introduced and studied in ~\cite{energy1,energy2}   (inspired by earlier work on energy-efficient algorithms in radio networks e.g.,~\cite{nakano,jurdin,pajak}).
Energy-efficient algorithms for radio networks for several
problems such as broadcast, leader election, breadth-first search, maximal matching, diameter and minimum-cut computation have been studied~\cite{energy1,energy2,energy3,energy4,energy5, CS22, CDJ23,Chang23}.

In another line of work, energy-efficient (or \emph{awake-efficient}) algorithms for MIS have been designed in the \emph{sleeping} model for CONGEST networks (called SLEEPING-CONGEST~\cite{podc2023}).
In this model, unlike radio networks (and more powerful), nodes can broadcast (or unicast) simultaneously without collisions; in other words, it is simply the standard CONGEST model with nodes having the additional flexibility to sleep to save energy. This model was introduced in~\cite{podc2020}. This paper showed that MIS (in general graphs) can be
solved in $O(1)$ \emph{node-averaged} awake complexity, which is measured by the \emph{average} number of rounds a node is
awake. This is in contrast to the worst-case awake (or energy) complexity (that is used in this paper and all other papers cited here) which is the
worst-case number of rounds a node is awake until it finishes the algorithm.
The
worst-case awake complexity of their MIS algorithm is $O(\log n)$, while the
worst-case complexity (that includes all rounds, sleeping and awake) is
$O(\log^{3.41}n)$ rounds. A  question left open in~\cite{podc2020} is whether one can design an MIS algorithm 
with  $o(\log n)$ worst-case awake complexity (even in the LOCAL model).
This question was answered in~\cite{podc2023} where it was shown that MIS can be solved in $O(\log \log n)$ awake complexity (with high probability) which is exponentially better than the round complexity lower bound of $\Omega(\sqrt{\log{n}/\log\log{n}})$.
Several recent works in the SLEEPING-CONGEST  model for fundamental problems such as MIS, approximate matching and vertex cover, spanning tree, minimum spanning tree, coloring, and other problems include~\cite{ghaffari-sleeping,BM21,AMP22,ghaffari-podc2023,ghaffari-podc2024, dufoulon-coloring,alkida}. 

\subsection{Organization}

The rest of the paper is organized as follows. We present our lower bound for the CD model in Section~\ref{subsec:CDMISLowerBound}. We then present our CD algorithm in Section~\ref{sec:CDMIS}. Section~\ref{subsec:noCDPrimitives} introduces preliminary techniques for our no-CD algorithm, which follows in Section~\ref{sec:noCDMIS}. We note that in Sections~\ref{sec:CDMIS}--\ref{sec:noCDMIS}, unless there is an explicit reference to an appendix, missing proofs can be found in the full version of the paper. We conclude and highlight open questions in Section~\ref{sec:conc}.

\section{ \texorpdfstring{An $\Omega(\log n)$}{A Logarithmic} Energy Complexity Lower Bound for MIS}
\label{subsec:CDMISLowerBound}

In this section, we show an $\Omega(\log n)$ lower bound on the energy complexity of MIS in the CD (and no-CD) model. Note that this lower bound is unconditional of the round complexity. 


\begin{theorem}
In radio networks with CD, any algorithm that solves MIS with probability strictly more than $e^{-1/4}$ has energy complexity of at least $1/2 \log n$.
\end{theorem}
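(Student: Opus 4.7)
The plan is to analyze the algorithm on the graph $G$ which is the disjoint union of $n/4$ edges and $n/2$ isolated vertices (so $|V(G)|=n$). Correctness on $G$ requires that each of the $n/4$ edges contains exactly one MIS vertex. Focus on a single edge $\{u,v\}$: since the two endpoints start in identical states and their random tapes are independent, both endpoints run the same randomized protocol on i.i.d.\ random bits. Let $N$ denote the event that neither endpoint ever successfully receives a message from the other during the entire execution. Conditional on $N$, each node's MIS decision is a deterministic function of its own random tape alone, and by the algorithmic symmetry these functions are identical; hence the two decisions are i.i.d.\ with some common marginal $p$, giving $\Pr(\text{edge correct}\mid N) = 2p(1-p) \le 1/2$ and therefore $\Pr(\text{edge correct}) \le 1 - \tfrac{1}{2}\Pr(N)$.

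Next, the $n/4$ edges are disjoint connected components with independent random tapes, so per-edge correctness events are mutually independent. Multiplying over edges,
\[
\Pr(\text{MIS correct on }G) \;\le\; \bigl(1 - \tfrac{1}{2}\Pr(N)\bigr)^{n/4} \;\le\; \exp\!\bigl(-\tfrac{n}{8}\Pr(N)\bigr),
\]
so a success probability exceeding $e^{-1/4}$ forces $\Pr(N) < 2/n$. The theorem therefore reduces to an exponential lower bound on $\Pr(N)$ in terms of the per-node energy budget $k$: with $\Pr(N) \ge 4^{-k}$ in hand, the inequality $4^{-k} \le 2/n$ yields $k \ge \tfrac{1}{2}\log n - O(1)$.

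For this third ingredient, a naive per-round union bound gives only $\Pr(\text{comm}) \le \sum_t 2\tau_t\ell_t \le k/2$, which is vacuous for $k \ge 2$. My plan is a chain-rule decomposition $\Pr(N) = \prod_t \Pr(N_t \mid N_{<t})$ exploiting the fact that before any message is received the two \emph{free} schedules $\sigma_u,\sigma_v \in \{\transmit,\listen,\sleep\}^T$ are i.i.d.\ with at most $k$ non-sleep entries (the worst-case energy bound applies to this silent branch). If $\alpha_t = \tau_t+\ell_t$ denotes the symmetric marginal awake probability at round $t$ under the conditioning on $N_{<t}$, the identity $2\tau_t\ell_t \le \alpha_t^2/2$ together with $\log(1-x)\ge-2x$ for $x\le 1/2$ yields $\Pr(N_t \mid N_{<t}) \ge 1 - \alpha_t^2/2$, and then summing with $\sum_t \alpha_t \le k$ and $\alpha_t \le 1$ gives $\Pr(N) \ge \exp(-\sum_t \alpha_t^2) \ge e^{-k} \ge 4^{-k}$. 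The main obstacle is justifying the per-round factorization: conditioning on $N_{<t}$ can in principle couple $(\omega_u,\omega_v)$, so the apparently obvious identity $\Pr(\transmit_u \wedge \listen_v \mid N_{<t}) = \tau_t\ell_t$ is not immediate. I intend to overcome this by noting that $N_{<t}$ is a symmetric event in $(\sigma_u|_{<t},\sigma_v|_{<t})$ and that, along the silent branch, each node's action at round $t$ is a measurable function of its own tape only, so the relevant conditional product structure (and symmetric marginals) is preserved well enough to push the bound through.
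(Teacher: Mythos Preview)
Your reduction to lower-bounding $\Pr(N)$ is correct, and in fact cleaner than the paper's: you use the i.i.d.\ symmetry of the two endpoints to get $\Pr(\text{edge correct}\mid N)\le 1/2$ directly, whereas the paper introduces the $n/2$ isolated vertices and argues via a Bayes-type indistinguishability that a node hearing nothing must join the output. Your version does not need the isolated vertices at all, and the independence-across-components step is fine.

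The genuine gap is in the third ingredient, and you have correctly located it. Two separate issues make the chain-rule route unworkable as written. First, the bound $\Pr(N_t^c\mid N_{<t})=2\tau_t\ell_t$ requires $\sigma_u(t)$ and $\sigma_v(t)$ to be conditionally independent given $N_{<t}$; but $N_{<t}$ is a joint event in $(\omega_u,\omega_v)$, and conditioning on it couples the two tapes. Symmetry of $N_{<t}$ under swapping $u,v$ gives equal conditional \emph{marginals}, but says nothing about the conditional \emph{joint} law, so the product form is unjustified. Second, even granting the per-round bound, the inequality $\sum_t\alpha_t\le k$ is suspect: each $\alpha_t$ is conditioned on a \emph{different} event $N_{<t}$, so the sum is not the conditional expectation of the energy under any single conditioning, and the deterministic bound ``at most $k$ awake rounds'' does not transfer. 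Your closing sentence (``preserved well enough to push the bound through'') is precisely where a proof would be needed and none is supplied.

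The paper sidesteps both problems with a short probabilistic-method argument that never conditions on partial silence. Let $\sigma_u$ be the free schedule (what $u$ does while it has heard nothing); it has at most $k$ non-\sleep{} entries. For a uniformly random string $x\in\{\transmit,\listen\}^\infty$, the event ``$\sigma_u$ agrees with $x$ at every awake round'' has probability exactly $2^{-(\text{\#awake rounds})}\ge 2^{-k}$, regardless of $\sigma_u$. Averaging and applying the probabilistic method yields a \emph{deterministic} $x^*$ with $\Pr(\sigma_u\text{ agrees with }x^*)\ge 2^{-k}$. Since $\sigma_u,\sigma_v$ are unconditionally independent, both agree with $x^*$ with probability at least $4^{-k}$, and that event is a sub-event of $N$ (at every round either one sleeps, or both are awake doing the same action dictated by $x^*$). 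This gives $\Pr(N)\ge 4^{-k}$ with no conditioning, and plugging into your first two steps finishes the proof. Note also that the paper's route gives the stated constant $\tfrac12\log n$ exactly, whereas your inequality $\Pr(\text{correct}\mid N)\le 1/2$ costs an extra factor of $2$ per edge and lands at $\tfrac12\log n - O(1)$.
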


\begin{proof}
Suppose to the contrary that every node is awake for $o(\log n)$ rounds and consider the following $n$-node anonymous graph $G$ (assume that $n$ is a multiple of 4): $G$ is the union of $n/4$ disjoint edges and $n/2$ isolated nodes. Intuitively, in this graph, each isolated node must join the independent set whereas each non-isolated node must agree with its neighbor about which of them joins the independent set. 





Consider a node $v$ that is awake for only $b$ rounds. If $v$ hears nothing in those rounds, then by Bayes' Law, 
the conditional probability that $v$ is an isolated node, given $v$'s state of knowledge, is at least $1/2$. Therefore
$v$ must decide to join the independent set. 
On the other hand, let $v$ be a non-isolated node and $w$ be its neighbor. 
It is necessary for at least one message sent by either $v$ or $w$ to be heard by the other, since otherwise, both would
join the independent set. 
We will show that, if $b = o(\log n)$, the probability that no messages are successfully heard by $v$ or $w$ is $n^{-o(1)}$.
Since our graph contains $n/4$ disjoint pairs of this type, whose success or failure is independent, 
it follows that, with probability $1 - o(1)$, at least one
such edge will fail to have either endpoint receive a message.  


We define a strategy as a distribution over infinite sequences over the set $\{S, T, L\}$ with at most $b$ occurrences of $T$ and $L$ combined, where $S$, $T$, and $L$ correspond to sleep, transmit, and listen, respectively. A randomized algorithm (that uses at most $b$ energy) corresponds to a strategy (or distribution) according to which each node will sample a sequence and follow it until it hears a message or a collision. All nodes are running the same algorithm, thus following the same strategy.

Let $A_{u,x}$ be the event that a node $u$'s chosen random sequence (over $\{S, T, L\}$) agrees with an infinite sequence $x$ over the set $\{T, L\}$ in each of its $b$ occurrences of $T$ or $L$, and let $I_{u,x}$ be the corresponding indicator random variable.
Given any choice of string by $u$, if $x$ is sampled uniformly at random, then   $\Pr(A_{u,x}) \ge 2^{-b}$, since only the indices where $u$'s string has $T$ or $L$ are relevant. Therefore $\Exp_x(I_{u,x}) \ge 2^{-b}$. Since this did not depend on the random choices made by $u$, we have $\Exp_u(\Exp_x (I_{u,x})) \ge 2^{-b}$.
Switching the order of the summation gives $\Exp_x(\Exp_u(I_{u,x})) \ge 2^{-b}$. 
By the probabilistic method, there exists an infinite sequence $x^*$, for which $\Exp_u(I_{u,x^*}) \geq 2^{-b}$; \emph{i.e.}, $\Pr(A_{u,x^*})\geq 2^{-b} $

Since nodes $v$ and $w$ sample their sequences (over $\{S, T, L\}$) independently and from the same distribution,
$ \Pr(A_{v,x^*} \text{ and } A_{w,x^*}) \geq 4^{-b} $.

The intersection of the events $A_{v,x^*}$ and $A_{w,x^*}$ implies that neither $v$ nor $w$ heard a message. During any timestep at which both were awake, their sequences agreed with the shared sequence $x^*$, so they both transmitted or listened. In this case, $v$ and $w$ would need to join the output set by the comments above. The probability that such an event occurs for at least one edge is given by
\begin{align*}
    \Pr(\exists \text{ edge } (v,w): \text{ $v$ and $w$ both join})) & = 1 - \Pr(\forall \text{ edges } (v,w): \text{ $v$ and $w$ do not both join})) \\
    & \geq 1 - {\left( 1 - 4^{-b} \right)}^{n/4} 
    \geq 1 - e^{-n/4^{b+1}}.
\end{align*}
This gives the algorithm at least a failure probability of at least $1 - e^{-1/4}$ if $b \le 1/2 \log n$, so MIS requires $\Omega(\log n)$ energy.
\end{proof}

\section{Energy-Optimal MIS in the Collision Detection (CD) Model}
\label{sec:CDMIS}

In the CD model, the best known algorithm solves MIS in $O(\log^2 n)$ round complexity, whereas the best known round complexity lower bound is $\Omega(\log n)$. The gap between both round complexities remains open, but in this section, we settle the energy complexity of MIS. More precisely, we show that MIS can be solved (energy-optimally) in $O(\log n)$ energy.

\begin{restatable}{theorem}{MISCD}
\label{thm:MISCD}
In the CD model, Algorithm~\ref{alg:MIS} outputs an MIS with probability at least $1- 1/n$. Moreover, it does so using $O(\log n)$ energy and in $O(\log^2 n)$ rounds.
\end{restatable}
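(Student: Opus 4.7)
The plan is to prove three claims about Algorithm~\ref{alg:MIS}: (i) its output is an MIS with probability at least $1-1/n$, (ii) it terminates in $O(\log^2 n)$ rounds, and (iii) each node's energy usage is $O(\log n)$. The round bound is immediate from the algorithm's structure described in Section~\ref{sec:CDMIS}: the algorithm consists of $O(\log n)$ iterations (``competitions''), each of length $\Theta(\log n)$, namely one round per bit of a $\Theta(\log n)$-bit rank plus constant overhead for the ``commit'' announcement. I will handle correctness via a ``local-maximum survives'' invariant for a single competition, bound the number of competitions by the standard Luby-style argument, and establish (iii) by showing each node's per-competition energy has a geometrically decaying tail.

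For correctness within a single competition, I will prove that if a node $v$ has the strictly maximum rank in its currently-active closed neighborhood, then $v$ hears only silence across all bit rounds and thus self-declares as a winner. The proof is by induction on the bit round $i$: if $v$'s bit is $1$, $v$ transmits and trivially survives; if $v$'s bit is $0$ and some active neighbor $u$ has bit $1$ at round $i$, then letting $p$ be the first position where $v$ and $u$ disagree (so $v_p=1, u_p=0$ by rank ordering), we derive a contradiction by cases on $p$ versus $i$, using the inductive hypothesis that $v$ was active and transmitting at round $p$. Because the $\Theta(\log n)$-bit ranks are pairwise distinct with high probability, winners are strictly maximal in their active neighborhoods, so their commit broadcasts cause adjacent nodes to consistently decide ``out of MIS''. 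Iterating, each undecided edge is resolved with constant probability per competition by the standard Luby-style analysis, so $O(\log n)$ competitions suffice to decide every node with probability at least $1-1/n$.

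The core novelty is the per-competition energy bound. I claim that in a single competition, for any non-winning node $v$ the expected energy spent is $O(1)$. Let $M$ be an active neighbor of $v$ with higher rank (such $M$ exists since $v$ is not a winner). Applying the local-max induction to the highest-rank chain of active neighbors originating from $M$ implies that $M$ (or some even-higher-rank node covering for $M$) remains transmitting through at least the first bit position $D$ where $v$ and $M$ disagree in the direction $v=0$, $M=1$; at round $D$, $v$ hears this transmission (or a collision) and goes to sleep. Hence $v$'s energy in the competition is at most $D$, which is stochastically dominated by a geometric random variable with parameter $1/2$, giving $\Exp[D] \leq 2$. A winning node spends $\Theta(\log n)$ energy but can win at most once before terminating. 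Summing across the $O(\log n)$ competitions, which use mutually independent fresh randomness, the total energy is stochastically dominated by $O(\log n)$ plus a sum of $O(\log n)$ independent geometric variables, and a Chernoff-type tail bound for sums of geometrics yields total energy $O(\log n)$ with probability at least $1 - 1/\poly(n)$.

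The principal obstacle is making the ``$M$ survives long enough'' argument robust: in general, $v$'s higher-rank neighbors can be knocked out by even higher-rank nodes lying outside $v$'s neighborhood before round $D$ is reached, so the argument must either be anchored at the globally maximum-rank node in $v$'s connected component (which is always active throughout a competition) or strengthened via an inductive domination argument processed in decreasing order of rank. A secondary but important technical step is upgrading the per-competition $O(1)$ expected energy into an $O(\log n)$ high-probability total, since a naive Azuma bound on $\Theta(\log n)$-ranged summands would be too weak; the fix is a standard Chernoff-type bound for the sum of independent geometrics (equivalently, a negative-binomial concentration).
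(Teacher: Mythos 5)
Your round-complexity and correctness outline essentially coincides with the paper's route (local maxima always survive the bit-by-bit competition, so winners are a superset of Luby's local maxima, and the standard edge-halving argument as in Lemma~\ref{lem:edgeProgressCD} bounds the number of Luby phases; adjacent winners would require identical ranks, which fails with high probability). The gap is in the energy bound, which is the heart of the theorem. Your mechanism --- ``a higher-rank active neighbor $M$ of a non-winner $v$ keeps transmitting until the first bit position $D$ where $v$ and $M$ disagree, so $v$'s energy is at most $D$'' --- is false in general, as you yourself note: $M$ can be silenced at one of its own $0$ bits by a node outside $N(v)$ well before position $D$, after which it never transmits again in that phase. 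Your proposed repairs do not close this: the globally maximum-rank node in $v$'s component is of no use unless it happens to be a neighbor of $v$, and the ``domination in decreasing rank order'' idea is not an argument. There are two further unaddressed issues: (i) your per-competition $O(1)$ claim conditions on $v$ being a \emph{non-winner}, i.e.\ on a future event, which invalidates the ``fresh independent coins'' reasoning you need for a geometric tail; and (ii) the per-phase energies of a fixed node across Luby phases are not independent (which nodes are still active is history-dependent), so summing them requires an explicit stochastic-domination step, not just ``fresh randomness.''

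The paper avoids all of this with a simpler global count. Call a Bitty phase in which $v$ is awake \emph{early} if at least one neighbor of $v$ is still active, and \emph{late} otherwise. All late rounds lie in a single Luby phase (with no active neighbors $v$ inevitably wins and terminates), contributing at most $\beta\log n$. For each early round, conditioned on the entire past, the fresh bits give probability at least $1/4$ that $v$ has a $0$ and some active neighbor has a $1$; such a ``fruitful'' round ejects $v$ from the current Luby phase, and there are only $C\log n$ Luby phases, hence at most $C\log n$ fruitful rounds in total. A Chernoff lower-tail bound then shows that, with probability $1-1/n^{2}$, $v$ has at most $8C\log n$ early rounds, and a union bound over vertices finishes the proof. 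Note this argument never needs any neighbor to survive until a disagreement bit, nor any conditioning on $v$'s eventual outcome; if you want to salvage your approach, you would need to replace the ``$M$ survives'' step by exactly this kind of history-conditioned fruitfulness argument (or an honest domination argument of the flavor the paper uses in the no-CD analysis of Theorem~\ref{thm:MISnoCD}).
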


\begin{algorithm}
\caption{MIS Algorithm in the CD model.} \label{alg:MIS}
\begin{algorithmic}[1] 
\State $status \gets \undecided{}$
\For{Luby phase $i \gets 1$ to $C \log n$} \Comment{$C$ and $\beta$ control the success probability}
    \State{ $x \gets $ random string of $\beta \log n$ bits} 
\label{line:randomRank}
    \For{Bitty phase $j \gets 1$ to $\beta \log n$}
        \If{$x_j = 1$} 
            \State{ \transmit \ 1}
        \Else
            \State{ \listen }
            \If{heard 1 or collision}
                \State{\sleep \ for $\beta \log n - j$ rounds}
                \State{\textbf{break} and jump to line 16}
            \EndIf
        \EndIf
    \EndFor
    \State{}    \Comment{Next line reached only if normal $j$ loop termination}
    \State{\transmit \ 1} \Comment{Confirm inclusion in the MIS}  
    \State{$status \gets \inMIS{}$}
    \State \terminate
    \State{}  \Comment{Next line reached only if broke out of $j$ loop}
    \State{\listen} \Comment{Final check in the current round}
    \If{heard 1 or collision}
        \State $status \gets \notInMIS{}$
        \State \terminate
    \EndIf
\EndFor
\end{algorithmic}
\end{algorithm}

\subsection{Algorithm Description} 
\label{sec:cddesc}
Our energy-optimal MIS algorithm (see Algorithm~\ref{alg:MIS} and Figure~\ref{fig:CD-flow} for a graphic illustration) runs in $C \log n$ \emph{Luby phases} (Lines 3--20), each taking $\beta \log n + 1 $ rounds (including sleeping and awake rounds) --- where $C$ and $\beta$ are constants that control the success probability. Conceptually, we can separate each Luby phase into two parts: the \emph{competition} part, consisting of the first $\beta \log n$ rounds, and the \emph{checking} part, consisting of the last round. (The same conceptual structure applies to our later algorithms, see Section \ref{sec:noCDMIS}.) 

The competition separates nodes into \emph{winning} and \emph{losing} nodes, and winners are added to the MIS in this Luby phase. 
Conditioned on the high probability that the ranks are distinct, the competition ensures that winners form an independent set among the non-terminated nodes.
After all $\beta \log n$ rounds of the competition, the checking ensures that nodes in the computed independent set, and their neighbors, terminate respectively in and out of the MIS. More precisely, any node $u$ that wins the competition enters the MIS, sends a message to all neighbors in the phase's last round, and terminates. Meanwhile, any node $u$ that loses the competition listens in the last round and checks whether one of its neighbors won (i.e., if $u$ hears a 1 or a collision). If that is the case, node $u$ terminates as a node not in the MIS. Otherwise, it continues to the next Luby phase.

Next, we describe the competition. At the start, each non-terminated node $u$ is awake and generates a sequence of $\beta \log n$ random bits independently. Call this $\beta \log n$ bit binary number the rank of $u$, denoted by $x_u$. 
Then, during the competition, node $u$ determines whether its rank is smaller than the rank of any of its neighbors. This is done by a bit-by-bit comparison, using $\beta \log n$ \emph{Bitty phases}. In the first such phase, node $u$ examines the first bit of $x_u$, and if the bit is 1 then $u$ transmits 1, otherwise $u$ listens. In the latter case, if $u$ hears a 1 or a collision, then $u$ sleeps for all remaining Bitty phases (and has lost the competition). As for any subsequent phase $i$, any node $u$ that hasn't gone to sleep executes the same procedure, but considering the $i$th bit of $x_u$. Finally, any node that has not gone to sleep in the $\beta \log n$ Bitty phases has won the competition. 

Finally, we make two remarks. First, unlike the classical Luby's algorithm~\cite{luby1985simple}, in this version with the bit-by-bit competition, some winners may not be local maxima. However, the set of winners is a superset of the local maxima. Hence, the correctness of Luby's algorithm implies 
(see the proof of Lemma \ref{lem:edgeProgressCD}) 
that of this bit-by-bit version.
Second, since in our algorithm, only the act of transmission matters, and the actual messages play no role, the algorithm can also be implemented in the beeping model with the same round and energy complexities.
More concretely, in the pseudocode of Algorithm \ref{alg:MIS}, one can replace ``transmit 1'' with ``beep'' and ``heard 1 or a collision'' with ``heard a beep'' since in the beeping model, a listening node hears a beep if at least one of its neighbors beep.

\subsection{Analysis} 

We break our analysis of Algorithm~\ref{alg:MIS} into two main steps. The proof of Theorem~\ref{thm:MISCD} appears in Appendix~\ref{app:proofs}.  First, we prove that, with high probability, it finds an independent set.

\begin{lemma} \label{lem:CD-output-independent}
Let $S = \{v \colon status(v) = \inMIS{}\}$ be the set output by Algorithm~\ref{alg:MIS}. Then
$
\Prob{S \mbox{ is an independent set}} \ge 1 - \frac{1}{2n}.
$
\end{lemma}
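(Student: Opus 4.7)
The plan is to argue that whenever two neighbors both end up in $S$, they must have drawn the \emph{same} random $\beta \log n$-bit string in one particular Luby phase, an event whose probability we can drive below $1/(2n)$ by a union bound over edges and phases.

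First, I will establish the key structural fact: if $u$ and $v$ are adjacent and both win the competition of the same Luby phase $i$, then their ranks drawn on Line~\ref{line:randomRank} satisfy $x_u = x_v$. Suppose for contradiction the strings differ, and let $j$ be the first position of disagreement; without loss of generality $x_u[j] = 0$ and $x_v[j] = 1$. Since neither $u$ nor $v$ goes to sleep during Bitty phases $1, \ldots, j-1$ (both are winners), both are awake in Bitty phase $j$. In that Bitty phase, $v$ transmits a $1$ while $u$ listens, so $u$ hears either a $1$ or a collision and sleeps for the rest of phase $i$---contradicting that $u$ won.

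Second, I will show that any two adjacent members of $S$ must in fact win in the \emph{same} Luby phase. Let $u \sim v$ with $u, v \in S$, and let $i^{\star}$ be the earliest Luby phase in which one of them terminates; WLOG $u$ terminates in phase $i^{\star}$. Since $u \in S$, this termination occurs on the ``confirm MIS'' branch, so $u$ wins the competition of phase $i^{\star}$ and transmits a $1$ in the final (checking) round of that phase. By minimality of $i^{\star}$, node $v$ is still \undecided{} at the start of phase $i^{\star}$ and therefore participates in it. If $v$ were a \emph{loser} of the phase-$i^{\star}$ competition, it would wake up for the final listening round, hear $u$'s $1$ (or a collision involving $u$), and terminate as \notInMIS{}, contradicting $v \in S$. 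Hence $v$ must also win the competition in phase $i^{\star}$.

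Combining the two observations, the event that $S$ fails to be independent is contained in the union, over edges $(u,v) \in E$ and phases $i \in \{1, \ldots, C \log n\}$, of the events $R^i_{uv} := \{u \text{ and } v \text{ both draw and match the same string in phase } i\}$. Since the strings generated on Line~\ref{line:randomRank} are fresh, uniform, and independent across nodes, $\Prob{R^i_{uv}} \le 2^{-\beta \log n} = n^{-\beta}$ (and is trivially zero when either node is already terminated). A union bound over at most $\binom{n}{2}$ edges and $C \log n$ phases yields
\[
\Prob{S \text{ is not independent}} \;\le\; \binom{n}{2} \cdot C \log n \cdot n^{-\beta} \;\le\; \frac{1}{2n}
\]
for $\beta$ chosen as a sufficiently large constant. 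The main obstacle is the bookkeeping in the second step: one must verify carefully, by reading the pseudocode, that a loser of a competition truly wakes for the checking round and that CD-model semantics guarantee reliable detection of a neighboring winner's transmission; everything else is an elementary union-bound calculation.
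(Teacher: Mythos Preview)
Your proof is correct. The paper defers the proof of this lemma to its full version and does not include it in the provided text, so a direct comparison is not possible; nevertheless, your argument---that two adjacent winners of the same Luby phase must have drawn identical $\beta\log n$-bit ranks, that two adjacent members of $S$ must have won in a common phase (since a loser listening in the checking round would be knocked out by a winning neighbor), and that a union bound over at most $\binom{n}{2}$ edges and $C\log n$ phases yields failure probability at most $1/(2n)$ for suitable $\beta$---is the natural and essentially forced approach, and the calculation goes through with the paper's stated choice $\beta\ge 4$.
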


Next, we show that, with high probability, every node is within one hop of at least one vertex output by our algorithm. To do so, we introduce the concept of a residual graph and follow part of a standard proof of Luby's algorithm, which says that, conditioned on whatever happened previously, one phase of Luby's algorithm shrinks the size of the residual graph by at least half, in expectation.

\begin{definition}
    Let $V_0 = V$ and $1 \le i \le C \log n$. Let $V_i$ denote the set of vertices that at the end of Luby phase $i$, have not yet terminated, or equivalently have $status = \undecided{}$. We call the subgraph of $G$ induced by $V_i$ the \emph{residual graph} at the end of Luby phase $i$, and denote it $G_i = (V_i, E_i)$.
\end{definition}

\begin{lemma}
\label{lem:edgeProgressCD}
    Let $1 \le i \le C \log n$. Then, $ \ExpCond{|E_i|}{E_{i-1}} \le \frac{|E_{i-1}|}{2}$.
\end{lemma}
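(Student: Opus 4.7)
The plan is to reduce the lemma to the classical edge-halving property of Luby's local-maximum MIS algorithm, coupling Algorithm~\ref{alg:MIS} to a hypothetical Luby-style execution on $G_{i-1}$ that uses the \emph{same} random ranks $\{x_v\}_{v \in V_{i-1}}$ drawn in Luby phase $i$. The structural input to the reduction is the claim, flagged in Section~\ref{sec:cddesc}, that the set $M$ of vertices winning the bit-by-bit competition of phase $i$ always contains the set $L$ of local maxima of $G_{i-1}$ under these ranks (ties, which occur only on an event of probability $O(n^{2-\beta})$, can be broken arbitrarily, e.g., by vertex id).

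To justify $M \supseteq L$, I would fix $v \in L$ and argue by contradiction. Suppose $v$ drops, and let $k$ be the smallest Bitty phase at which $v$ drops. Then $v_k = 0$ and there is a neighbor $u$ of $v$, still awake through phase $k$, with $u_k = 1$. Let $j$ be the first bit at which $x_v$ and $x_u$ disagree; local maximality forces $v_j = 1$ and $u_j = 0$. The case $j > k$ is impossible because then $v_k = u_k$, contradicting $v_k = 0 \ne 1 = u_k$. The case $j = k$ contradicts $v_k = 0$. The case $j < k$ is also impossible: $u$ is awake throughout phases $1,\dots,k$, so in particular at phase $j$, where $u$ listens (since $u_j = 0$) and $v$ transmits (since $v_j = 1$ and $v$ is awake at $j$ by minimality of $k$), causing $u$ to drop at $j < k$, contradicting that $u$ is awake at $k$.

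With $M \supseteq L$ established, the next step is to observe that every vertex in $L \cup N_{G_{i-1}}(L)$ terminates during phase $i$ of Algorithm~\ref{alg:MIS}: local maxima win and terminate as \inMIS, and any loser neighboring a local maximum wakes for the phase's checking round and hears that maximum's confirmation transmission, terminating as \notInMIS. Thus the surviving vertex set $V_i$ is contained in $V_i^{\text{Luby}} := V_{i-1} \setminus (L \cup N_{G_{i-1}}(L))$, so $|E_i| \le |E_i^{\text{Luby}}|$ pointwise on the random choices. Taking conditional expectations, $\mathbb{E}[|E_i| \mid E_{i-1}] \le \mathbb{E}[|E_i^{\text{Luby}}| \mid E_{i-1}] \le |E_{i-1}|/2$, where the final inequality is the standard Luby edge-halving lemma, proved by the classical oriented-edge charging argument over the random ranks.

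The main obstacle is the case analysis in the second paragraph. A drop in the bit-by-bit competition is triggered by \emph{any} transmitting neighbor, not only one with a strictly larger rank, so it is a priori conceivable that a cascade of collisions among $v$'s neighbors could drop a true local maximum $v$. Tracing the first bit of disagreement and invoking minimality of the dropping phase $k$ to keep both $u$ and $v$ awake at that bit is the key maneuver that closes this loophole. The rest of the argument is a structural coupling to Luby and a black-box appeal to its classical analysis.
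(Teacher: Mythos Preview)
Your proposal is correct and follows essentially the same approach as the paper: show that the winners of the bit-by-bit competition contain all local maxima, then couple to a Luby execution on $G_{i-1}$ with the same ranks and invoke the classical edge-halving lemma. Your case analysis on the first disagreement bit is a more detailed version of the paper's one-line justification (``for a local maximum $v$, every neighbor has a zero in the first Bitty phase when it disagrees with $v$''), but the structure of the argument is identical.
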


We remark that the proof of the above lemma (see Appendix~\ref{app:proofs}) does not imply that the running time of Luby's algorithm stochastically dominates the running time of Algorithm~\ref{alg:MIS}.  This is because the monotonicity we exploit in our proof only exists within a single Luby phase; not across multiple Luby phases.  However, we do get the following immediate corollary.

\begin{corollary} \label{cor:Luby-done}
   $ \Prob{E_{C \log n} = \emptyset} \ge 1 - \frac{1}{2n}.$
\end{corollary}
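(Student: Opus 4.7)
The plan is to derive the corollary by iterating Lemma~\ref{lem:edgeProgressCD} and applying Markov's inequality, with $C$ chosen large enough to make the resulting tail bound smaller than $1/(2n)$. First I would take the unconditional expectation of both sides of the inequality in Lemma~\ref{lem:edgeProgressCD} using the tower property: $\E[|E_i|] = \E[\ExpCond{|E_i|}{E_{i-1}}] \le \E[|E_{i-1}|]/2$. Iterating this for $i = 1, \ldots, C \log n$ (a straightforward induction on $i$) gives
\[
\E[|E_{C \log n}|] \;\le\; \frac{|E_0|}{2^{C \log n}} \;\le\; \frac{n^2}{2^{C \log n}}.
\]
Choosing $C \ge 3$ makes the right-hand side at most $1/(2n)$. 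Since $|E_{C \log n}|$ is a non-negative integer, Markov's inequality yields $\Prob{|E_{C \log n}| \ge 1} \le \E[|E_{C \log n}|] \le 1/(2n)$, which rearranges to the claim.

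The one subtlety worth flagging explicitly is the one the authors already mentioned in the remark following Lemma~\ref{lem:edgeProgressCD}: Lemma~\ref{lem:edgeProgressCD} is a conditional statement about the expected shrinkage in a \emph{single} Luby phase, not a stochastic dominance statement across phases, so I cannot naively cite a ``Luby's algorithm terminates in $O(\log n)$ phases'' result. The tower-property step above is precisely the device that bridges this gap: it only needs the one-step conditional bound, and it averages over the (possibly complicated) joint distribution of residual graphs across phases. This is the only nontrivial point; everything else is routine. In particular, no independence between phases is required, and the constant $C$ can be tuned upward if a larger polynomial denominator (e.g.\ $1/\poly(n)$) is wanted without affecting any other step of the argument.
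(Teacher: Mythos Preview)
Your proposal is correct and matches the intended argument: the paper treats Corollary~\ref{cor:Luby-done} as an immediate consequence of Lemma~\ref{lem:edgeProgressCD}, and the analogous step in the proof of Theorem~\ref{thm:MISnoCD} spells out exactly the tower-property iteration plus Markov's inequality that you describe. One tiny arithmetic nit: with $|E_0| \le n^2$ and $C=3$ you get $1/n$ rather than $1/(2n)$, but either tightening to $|E_0| \le \binom{n}{2} < n^2/2$ or bumping $C$ (as you already note is possible) closes the factor of two.
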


\begin{lemma}
\label{lem:CD-output-maximal}
Let $S = \{v \colon status(v) = \inMIS{}\}$ be the set output by Algorithm~\ref{alg:MIS}.  Then
$
\Prob{S \cup N(S) = V} \ge 1 - \frac{1}{2n}.
$    
\end{lemma}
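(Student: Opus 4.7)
The plan is to show that with probability at least $1 - 1/(2n)$, no vertex remains in the $\undecided$ state after all $C\log n$ Luby phases of Algorithm~\ref{alg:MIS}; once this is established, $S \cup N(S) = V$ follows directly from the meaning of the two terminal statuses.

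First I would prove a structural sub-lemma: if a vertex $v$ has no undecided neighbor at the start of some Luby phase $i$ (equivalently, $v$ is isolated in the residual graph $G_{i-1}$), then $v$ must join the MIS during phase $i$. Indeed, every neighbor of $v$ has already terminated and is no longer transmitting, so $v$ never hears a $1$ or a collision in any Bitty phase, never breaks out of the inner loop, reaches line~16, transmits $1$, and commits with $status = \inMIS$. Applying the contrapositive of this sub-lemma to the final Luby phase $C\log n$: any $v \in V_{C\log n}$ must still have some undecided neighbor at the start of phase $C\log n$, i.e., an incident edge in $E_{C\log n - 1}$. Consequently $V_{C\log n} \neq \emptyset$ implies $E_{C\log n - 1} \neq \emptyset$.

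Iterating Lemma~\ref{lem:edgeProgressCD} via the tower property yields $\Exp[|E_{C\log n - 1}|] \leq |E_0|/2^{C\log n - 1} \leq n^2/2^{C\log n - 1}$; for $C$ chosen as large as already needed for Corollary~\ref{cor:Luby-done}, Markov's inequality gives $\Prob{E_{C\log n - 1} \neq \emptyset} \leq 1/(2n)$, and therefore $\Prob{V_{C\log n} \neq \emptyset} \leq 1/(2n)$ as well.

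Finally, conditioned on $V_{C\log n} = \emptyset$, every vertex has a terminal status. Vertices with $status = \inMIS$ lie in $S$ by definition, while a vertex $u$ with $status = \notInMIS$ reached that state upon hearing a $1$ or collision in the checking round of some phase, which is possible only if at least one of $u$'s then-undecided neighbors won the competition and transmitted at line~16, thereby committing to $S$. In either case $u \in S \cup N(S)$, so $S \cup N(S) = V$ on this high-probability event. The main delicate point is the sub-lemma: one must rely on the fact that terminated neighbors genuinely stop participating in the radio channel, so they can neither cause the isolated-in-residual vertex to sleep nor knock it out in the checking round.
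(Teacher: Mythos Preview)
Your proof is correct and follows essentially the same route the paper sets up: iterate Lemma~\ref{lem:edgeProgressCD} (edge-halving), apply Markov's inequality to force the residual edge set to be empty with high probability, and then argue that every vertex must have terminated in one of the two final statuses, each of which places it in $S \cup N(S)$.

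One point worth highlighting is your use of $E_{C\log n - 1}$ rather than $E_{C\log n}$ (which is what Corollary~\ref{cor:Luby-done} bounds). This off-by-one actually matters: the event $E_{C\log n} = \emptyset$ by itself does \emph{not} imply $V_{C\log n} = \emptyset$ (for instance, on a path $a$--$b$--$c$ with ranks $1<2<3$, after one phase $c$ is \inMIS, $b$ is \notInMIS, but $a$ is still undecided and isolated in the residual graph). Your structural sub-lemma---that an isolated-in-residual vertex necessarily wins its Luby phase---is exactly what bridges $E_{C\log n - 1} = \emptyset$ to $V_{C\log n} = \emptyset$, and requiring the edge set to vanish one phase earlier is the right move. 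The same Markov argument that yields Corollary~\ref{cor:Luby-done} gives the needed bound on $\Pr(E_{C\log n - 1} \neq \emptyset)$ for the same choice of $C$, so there is no loss.

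A trivial remark: your line-number references (``reaches line~16, transmits~1'') do not match the algorithm's numbering---line~16 is the jump target for \emph{losers}, while the winner's confirming transmit is a few lines earlier. This is purely cosmetic and does not affect the argument.
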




\section{Auxiliary Primitives in the no-CD Model}
\label{subsec:noCDPrimitives}

We give several primitives in the no-CD model, and these will be key components of our no-CD MIS algorithm in Section \ref{sec:noCDMIS}. More concretely, we first give several energy-efficient backoff procedures. Then, we give an improved runtime version of the algorithm from~\cite{davies2023}.

\subsection{Energy-Efficient Backoff Procedures}
\label{subsec:backoff}

We design slightly more energy-efficient sender- and receiver-backoff procedures, which are used to adapt our CD algorithm to work in the no-CD model. These are described in more detail in Appendix~\ref{app:backoff} and have the following properties, which are proven in Appendix~\ref{app:backoff}.

\begin{lemma}
\label{lem:ETRBackoffUpperBounds}
    Let $k$ be any positive integer. Both (sender-side and receiver-side) energy-efficient $k$-repeated backoffs take $O(k \log \Delta)$ rounds. Moreover, any node $v$ calling 
    \begin{itemize}
        \item $\Call{\SndBackoff{}}{k,\Delta}$ is awake for $k$ rounds.
        \item $\Call{\RecBackoff{}}{k,\Delta, \DeltaEst}$ is awake for $O(k \log \DeltaEst)$ rounds.
    \end{itemize}
\end{lemma}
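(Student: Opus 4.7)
The plan is to analyze the round complexity and the two energy bounds essentially independently, leveraging the standard structure of exponential-backoff protocols. Recall that a single exponential backoff of Bar-Yehuda--Goldreich--Itai style consists of $\Theta(\log \Delta)$ rounds (one ``level'' $i$ for each $i \in \{0,1,\ldots,\lceil \log \Delta \rceil\}$, in which senders transmit with probability $\sim 2^{-i}$). The round-complexity bound then follows immediately: a single backoff takes $O(\log \Delta)$ rounds, and the $k$-repeated variant simply runs $k$ of these in sequence, giving $O(k \log \Delta)$ rounds total. I would state this as the easy case and move on.

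For the sender-side energy bound, the key design principle I would use is to have each sender, at the start of a single backoff, sample a single level $i \in \{0, 1, \ldots, \lceil \log \Delta \rceil\}$ from the distribution whose ``round-by-round'' decoupling is equivalent to the classical per-round Bernoulli-$2^{-i}$ transmission schedule. Concretely, the sender can sleep in every round of a backoff except for the single round corresponding to the sampled level, at which it wakes up and transmits. This makes the sender awake for exactly one round per backoff, which over $k$ backoffs gives exactly $k$ awake rounds, matching the claimed bound.

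For the receiver-side bound, the idea is that the receiver only listens over the ``relevant'' portion of each backoff, which is determined by $\DeltaEst$ rather than $\Delta$. More precisely, in each single backoff I would have the receiver listen only during the last $O(\log \DeltaEst)$ levels (i.e.\ the levels corresponding to transmission probabilities $\Omega(1/\DeltaEst)$), and sleep in the earlier, noisy, high-probability levels in which collisions are very likely anyway. Summed over $k$ repetitions, this gives $O(k \log \DeltaEst)$ awake rounds. The two halves of the primitive (sender and receiver) can then be combined so that the non-listened rounds on the receiver side are precisely the rounds in which the overall backoff protocol is unlikely to deliver useful information.

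The main obstacle, and the part of the argument that needs the most care, is making sure the energy savings above do not break the correctness of the backoff primitive (i.e.\ its guarantee that a receiver with at least one sender neighbor hears a message with constant probability per single backoff, and thus with high probability after $k = \Theta(\log n)$ repetitions). This will likely require a careful coupling between the sender's sampled level and the receiver's listening window, verifying that the events in which success would occur under the full classical backoff are preserved under the energy-frugal schedule, and bounding the probability mass lost by sleeping through early, high-contention levels. I expect that with $\DeltaEst$ chosen to be at least a constant-factor approximation of the receiver's actual contending-neighbor count, this loss is only a constant factor, so the correctness statements cited from~\cite{BGI92} still apply (possibly with an adjusted constant in the number of repetitions); the quantitative statement in Lemma~\ref{lem:ETRBackoffUpperBounds} about round and energy complexity then follows from the per-backoff counts above.
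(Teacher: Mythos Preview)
Your counting argument for the complexity bounds is essentially the paper's proof: each of the $k$ iterations has $\lceil \log \Delta \rceil$ rounds, giving $O(k \log \Delta)$ total; the sender transmits in exactly one round per iteration (the one indexed by a sampled geometric variable), giving $k$ awake rounds; and the receiver listens in at most $\lceil \log \DeltaEst \rceil$ rounds per iteration, giving $O(k \log \DeltaEst)$ awake rounds. The paper dispatches the lemma in two sentences along exactly these lines.

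However, your reconstruction of the receiver's listening window is backwards, and your description is internally inconsistent. In the paper's $\Call{\RecBackoff{}}{}$, the receiver listens in the \emph{first} $\lceil \log \DeltaEst \rceil$ rounds of each iteration (the high-probability levels), not the last ones. Your parenthetical ``levels corresponding to transmission probabilities $\Omega(1/\DeltaEst)$'' correctly identifies the early rounds, but ``the last $O(\log \DeltaEst)$ levels'' and ``sleep in the earlier, noisy, high-probability levels'' point to the opposite end. This does not affect the present lemma, since either window yields $O(\log \DeltaEst)$ awake rounds per iteration. But it would break the companion correctness guarantee (Lemma~\ref{lem:ETRBackoffCorrectness}): if the receiver has $d_S \le \DeltaEst$ sender neighbors, the round in which exactly one transmits is around $j \approx \log d_S \le \log \DeltaEst$, i.e., an \emph{early} round; sleeping through those would cause the receiver to miss the signal. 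Your intuition that early rounds are ``noisy'' holds only when $d_S$ is large, whereas the algorithm must work for all $d_S$ up to $\DeltaEst$.

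Finally, your last paragraph is not part of this lemma at all: the correctness of the backoff is the content of Lemma~\ref{lem:ETRBackoffCorrectness}, and the present statement is purely a round/energy count, so you can stop after the counting argument.
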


\begin{lemma}
\label{lem:ETRBackoffCorrectness}
    For any node $v$ calling $\Call{\RecBackoff{}}{k,\Delta, \DeltaEst}$ with at most $\DeltaEst$ neighbors calling $\Call{\SndBackoff{}}{k,\Delta}$ simultaneously, it holds with probability at least $1-(7/8)^k$ that node $v$ returns true if and only if at least one of its neighbors simultaneously called $\Call{\SndBackoff{}}{k,\Delta}$.     
\end{lemma}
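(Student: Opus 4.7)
The plan is to prove the biconditional in two directions, handling the ``only if'' direction deterministically and the ``if'' direction via a standard Bar-Yehuda--Goldreich--Itai style decay argument combined with independence across the $k$ repetitions.

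First, for the ``only if'' direction, I would observe that if no neighbor of $v$ simultaneously invokes $\Call{\SndBackoff{}}{k,\Delta}$, then in every round during $v$'s execution of $\Call{\RecBackoff{}}{k,\Delta,\DeltaEst}$ no neighbor of $v$ transmits. Consequently $v$ hears silence in every listening round and, by the definition of $\Call{\RecBackoff{}}{}$ (as described in Appendix~\ref{app:backoff}), must return \textsf{false}. This direction therefore holds with probability exactly $1$, contributing nothing to the failure bound.

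Next, for the ``if'' direction, I would condition on some $d \in [1, \DeltaEst]$ neighbors simultaneously invoking $\Call{\SndBackoff{}}{k,\Delta}$, and isolate a single one of the $k$ independent backoff repetitions. The goal of this step is to show that in a single repetition, $v$ hears an uncollided message from some neighbor with probability at least $1/8$. The standard argument is as follows: each sender chooses a ``level'' $\ell$ roughly uniformly in $\{0,1,\dots,\lceil\log \Delta\rceil\}$ and transmits with probability $2^{-\ell}$ at round $\ell$ (or an equivalent schedule, depending on the exact design in Appendix~\ref{app:backoff}). At the level $\ell^*$ closest to $\log d$, the expected number of transmitting senders is $\Theta(1)$, and a first/second-moment calculation shows that exactly one sender transmits at round $\ell^*$ with at least some fixed constant probability, say $1/8$. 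Because $v$ is listening and the range $[1, \DeltaEst]$ guarantees $\ell^*$ falls inside the schedule's support, $v$ indeed receives this message.

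Finally, the $k$ backoff repetitions use independent randomness, so the events ``$v$ hears nothing in repetition $i$'' are mutually independent, each of probability at most $7/8$. Thus the probability that $v$ hears nothing across all $k$ repetitions is at most $(7/8)^k$, which yields the claimed bound $1-(7/8)^k$ after taking the complement. The main obstacle I anticipate is not the overall structure --- which is a textbook decay argument plus independent amplification --- but rather matching the per-repetition constant to the exact $7/8$ constant stated in the lemma; this comes down to carefully tuning the level distribution and constants hidden in the $O(\log \Delta)$ length of a single repetition, which are fixed in Appendix~\ref{app:backoff}.
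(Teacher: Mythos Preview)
Your proposal is correct and follows essentially the same approach as the paper: handle the trivial direction deterministically, show a single backoff iteration succeeds with probability at least $1/8$ by focusing on the round $j \approx \log d$ (which lies within $v$'s listening window because $d \le \DeltaEst$), and then amplify via independence across the $k$ iterations. The only minor difference is that the paper obtains the $1/8$ constant by a direct computation of the probability that exactly one sender transmits at round $j = \lceil \log d \rceil$ (using $(1-x) \ge (1/4)^x$ for $x \in [0,1/2]$), rather than a moment argument, and handles $d=1$ separately as trivially successful.
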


\subsection{Round-Efficient MIS in the no-CD Model}
\label{subsec:DaviesAlg}

As stated in the related work, Davies~\cite{davies2023} gives an $O(\log^3 n)$ round algorithm for MIS in the no-CD model. We make several (minor) modifications 
to improve its runtime to $O(\log^2 n \log \Delta)$ rounds, where $\Delta$ is an upper bound on the maximum degree of the graph.
We call the improved algorithm \peterAlg{} and give a short description below. It serves as one of the components of our $O(\log^2 n \log \log n)$ energy MIS algorithm given in Section~\ref{sec:noCDMIS}.

Before we describe the minor improvements, note that in~\cite{davies2023}, each phase simulating one round of Ghaffari's MIS algorithm is also called a ``round,'' whereas timesteps are what we call rounds in our model.
First, the number of timesteps in the $\Call{Decay}{}$ subroutine can be reduced to $\Theta(\log \Delta)$. Even then, $O(\log n)$ iterations of this shorter $\Call{Decay}{}$ subroutine results in the same high probability success guarantee.
Second, the $\Call{EstimateEffectiveDegree}{}$ subroutine can be run for only $\Theta(\log \Delta)$ outer loop iterations (each still consisting of $O(\log n)$ timesteps) while maintaining the same high probability success guarantee. The rest of the analysis remains unchanged.

\section{Energy-Efficient MIS in the no-CD Model}
\label{sec:noCDMIS}

In the no-CD model, the best known round and energy upper bound is $O(\log^3 n)$ \cite{davies2023}, or if we parametrize by $\Delta$, $O(\log^2 n \log \Delta)$ (see Section \ref{subsec:DaviesAlg}).
On the other hand, the only known energy lower bound in this setting is the $\Omega(\log n)$ lower bound from the CD model. Hence, there is a $\Theta(\log n \log \Delta)$ gap between the known upper and lower bounds in the no-CD model. In this section, we present an MIS algorithm for no-CD (Algorithm \ref{alg:no-CD-MIS}) with a significantly better energy complexity of $O(\log^2 n \log \log n)$, and as a result we narrow this gap by a logarithmic factor, down to $\Theta(\log n \log \log n)$.

\begin{restatable}{theorem}{MISnoCD}
\label{thm:MISnoCD}
In the no-CD model, Algorithm \ref{alg:no-CD-MIS} outputs an MIS with probability at least $1-1/n$.
Moreover, it does so using $O(\log^2(n) \log \log n)$ energy and in $O(\log^3 n \log \Delta)$ rounds.
\end{restatable}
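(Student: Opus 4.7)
My plan is to adapt the CD algorithm of Algorithm~\ref{alg:MIS} to the no-CD setting by simulating each of its listening rounds with a repeated backoff from Section~\ref{subsec:backoff}, and to control the total energy through two targeted modifications. A naive simulation would replace every listen with a $k$-repeated receiver backoff of depth $k = \Theta(\log n)$, so that by Lemma~\ref{lem:ETRBackoffCorrectness} each simulated round fails with probability $n^{-\Omega(1)}$ and a union bound over all simulated rounds still leaves failure probability at most $1/n$. By Lemma~\ref{lem:ETRBackoffUpperBounds}, each such backoff costs $O(\log n \log \Delta)$ rounds, $O(\log n)$ sender energy, and $O(\log n \log \DeltaEst)$ receiver energy, where $\DeltaEst$ is a local upper bound on the number of simultaneously transmitting neighbors. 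Multiplying the $O(\log n)$ CD energy of Algorithm~\ref{alg:MIS} by the simulation overhead, the round bound $O(\log^3 n \log \Delta)$ will follow immediately, and the energy bound will follow provided I can enforce $\DeltaEst = O(\log n)$ wherever heavy listens occur.

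The two algorithmic ingredients I would build in are as follows. First, I would split each post-competition check into a \emph{shallow check}, in which every loser participates in a single backoff ($O(\log \DeltaEst)$ energy per Luby phase, hence $O(\log n \log \log n)$ over all phases), and a \emph{deep check}, in which only the candidate winners run a full $k = \Theta(\log n)$-repeated backoff to confirm no neighbor has already joined the MIS. Since any winner commits after its deep check and never runs one again, this heavy per-node cost of $O(\log n \log \DeltaEst) = O(\log n \log \log n)$ is charged only once across the whole execution. Second, I would declare a node \emph{committed} if it reaches the end of the Luby loop still undecided and having spent its full energy budget. I would prove that the subgraph induced by committed nodes has maximum degree $O(\log n)$ with high probability, then invoke \peterAlg{} on this subgraph; by the bound from Section~\ref{subsec:DaviesAlg}, this costs only $O(\log^2 n \log \log n)$ additional energy and $O(\log^2 n \log \log n)$ additional rounds.

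For correctness I would follow the outline of Lemmas~\ref{lem:CD-output-independent} and~\ref{lem:CD-output-maximal}. For independence, within any given simulated competition, two adjacent surviving nodes would have had identical ranks, an event of probability $n^{-\Omega(1)}$ per edge that a union bound across phases rules out; and the deep check then catches any remaining conflict between a would-be winner and an already-joined neighbor. For maximality, I would re-establish the analogue of Lemma~\ref{lem:edgeProgressCD}, namely that the expected number of edges of the residual graph shrinks by a constant factor per Luby phase, even when the constant-probability shallow check occasionally fails to remove a loser who already has an MIS neighbor. After $O(\log n)$ Luby phases the only remaining undecided nodes are committed, and \peterAlg{} finishes them, yielding a full MIS with failure probability $1/n$.

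The main obstacle, I expect, will be re-establishing edge progress under the shallow check. Luby's classical argument relies on every loser reliably detecting a winning neighbor and dropping out, whereas the shallow check succeeds only with constant probability per loser. I plan to orient each edge of the residual graph toward its higher-ranked endpoint (as in Luby) and argue that a constant fraction of edges is removed per phase by separately accounting for the event that the higher endpoint wins and the lower endpoint's shallow check succeeds; independence of ranks from the shallow backoff's randomness is essential here. A second delicate step is the committed-degree bound: a node commits only if it survives $\Omega(\log n)$ bit-by-bit comparisons while having at least one undecided neighbor, an event of probability $n^{-\Omega(1)}$ per neighbor, so the expected number of committed neighbors of any node is at most $\Delta \cdot n^{-\Omega(1)}$ and a Chernoff bound pushes this to a high-probability $O(\log n)$ bound. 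Putting these pieces together yields the claimed $O(\log^2 n \log \log n)$ energy and $O(\log^3 n \log \Delta)$ round bounds with failure probability $1/n$.
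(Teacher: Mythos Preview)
Your high-level decomposition (shallow vs.\ deep post-competition checks; committed nodes handed off to \peterAlg{}) matches the paper's, but two load-bearing steps are wrong or missing.

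\textbf{The committed-degree bound does not follow from your argument.} You write that ``a node commits only if it survives $\Omega(\log n)$ bit-by-bit comparisons while having at least one undecided neighbor, an event of probability $n^{-\Omega(1)}$ per neighbor.'' But your own trigger for committing is ``reaches the end of the Luby loop still undecided and having spent its full energy budget.'' These are in tension: if committing requires $\Omega(\log n)$ listens with $\DeltaEst=\Delta$, the energy spent \emph{before} committing is already $\Omega(\log^2 n\log\Delta)$ per phase, destroying the budget; if instead you commit after the first full listen (as the paper does), then committing is not a rare event at all, and your $\Delta\cdot n^{-\Omega(1)}$ expectation bound evaporates. The paper's actual argument is different and does not rely on commits being rare: a node commits at its \emph{first} $0$-bit if it hears nothing in that backoff, and then one shows (Lemmas~\ref{lem:Ci-neighborsSimultaneousCommits} and~\ref{lem:Ci-LowDegree}) that any two adjacent committed nodes must have committed in the \emph{same} Bitty phase, hence share the same leading-ones prefix, and a concentration bound over the independent bit choices of the neighbors caps how many can share that prefix.

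\textbf{The energy of losing phases is not accounted for.} Your master equation ``CD energy $O(\log n)$ times per-round overhead $O(\log n\log\DeltaEst)$'' requires $\DeltaEst=O(\log n)$ on \emph{every} listen, but on the first $0$-bit of every phase the node has not yet committed and must listen with $\DeltaEst=\Delta$. If the node then \emph{hears} (i.e., loses) rather than commits, it neither joins $C_i$ nor reduces $\DeltaEst$; it simply participates again next phase. Naively this costs $O(\log n\log\Delta)$ per losing phase and $O(\log^2 n\log\Delta)$ overall, which is too much. The paper closes this gap with a separate probabilistic argument: early termination inside the receiver backoff means the first-$0$-bit listen in a losing phase costs only $O(Y_i\log\Delta)$, where $Y_i$ is the iteration at which the node first hears; the $Y_i$ are stochastically dominated by i.i.d.\ geometrics with parameter $1/8$, and a concentration bound for sums of geometrics (Theorem~\ref{Chernoff Bound}) gives $\sum_i Y_i=O(\log n)$ w.h.p., hence $O(\log n\log\Delta)\le O(\log^2 n)$ total. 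Your proposal has no analogue of this step, and without it the energy bound does not close.
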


\subsection{Insights into our Algorithm}

To solve MIS in the no-CD setting, one option is to take an MIS algorithm in the CD model (say, Algorithm~\ref{alg:MIS}) and simulate it using traditional backoff (described in Section~\ref{subsec:backoff}). However, we would need to simulate each round with high probability, leading to an $O(\log n \log \Delta)$ blow-up in the round complexity but also, and most importantly, in the energy complexity.
We identify two areas that drive the energy cost up by a factor of $O(\log n \log \Delta)$ when we take the above approach with Algorithm~\ref{alg:MIS} and make non-trivial adaptations to address them. 

\subsubsection{Competition}
The first problematic place is in the inner loop of the competition, where nodes with a `0' bit listen to determine if they should drop out. An eventual winner is one that survives the Luby phase without hearing any of its neighbors. Note that just as any other node, it is likely to have $\Theta(\log n)$ 0's in its bitstring (since each bit is chosen uniformly at random and independently). But, because an eventual winner never hears a neighbor, it will listen for all rounds in all backoffs corresponding to `0' bits. As a result, during this Luby phase, it will have spent $O(\log^2 n \log \Delta)$ energy.

A similar issue may affect some eventual losers, as well.
For example, two adjacent nodes may happen to choose random strings that agree in their first $\Theta(\log n)$ bits. If they have no other neighbors to knock them out of competition, both nodes will spend $\Theta(\log^2 n \log \Delta)$ energy listening to silence during their $0$ bit rounds, before one eventually loses.

To fix the above issues, we give each node an energy budget of $O(\log n \log \Delta)$. If a node is awake for the entire backoff corresponding to its first `0' bit, then it has used up a significant portion of its budget for the entire algorithm. It cannot afford to use this much energy repeatedly, so it will do two things to reduce its energy use: it will play the remainder of the Luby phase with the (justified) assumption that it has only $O(\log n)$ surviving neighbors, so it can shorten how long it listens in the backoffs. Additionally, even if it is knocked out of the competition, it will commit itself to terminating in this Luby phase. The nodes that are thus committed induce a subgraph of maximum degree $O(\log n)$, so at the end of the Luby phase, they can afford to run an algorithm that is energy-efficient on small degree graphs (e.g,. the naive simulation of Algorithm~\ref{alg:MIS}, or for better runtime, the algorithm from Section~\ref{subsec:DaviesAlg}).
Each committed node with no neighbor in the output set runs this subroutine exactly once, bringing the energy complexity to $O(\log^2 n \log \log n)$.

\begin{figure}[H]
\centering
\definecolor{flowchartorange}{RGB}{245 166 35}
\definecolor{flowchartred}{RGB}{218, 41, 28}
\definecolor{flowchartblue}{RGB}{0, 156, 189}
\definecolor{flowchartgreen}{RGB}{132, 189, 0}
\definecolor{flowchartpurple}{RGB}{125, 85, 199}
\begin{minipage}[c]{0.6\linewidth}
\centering
\includegraphics[scale=0.4]{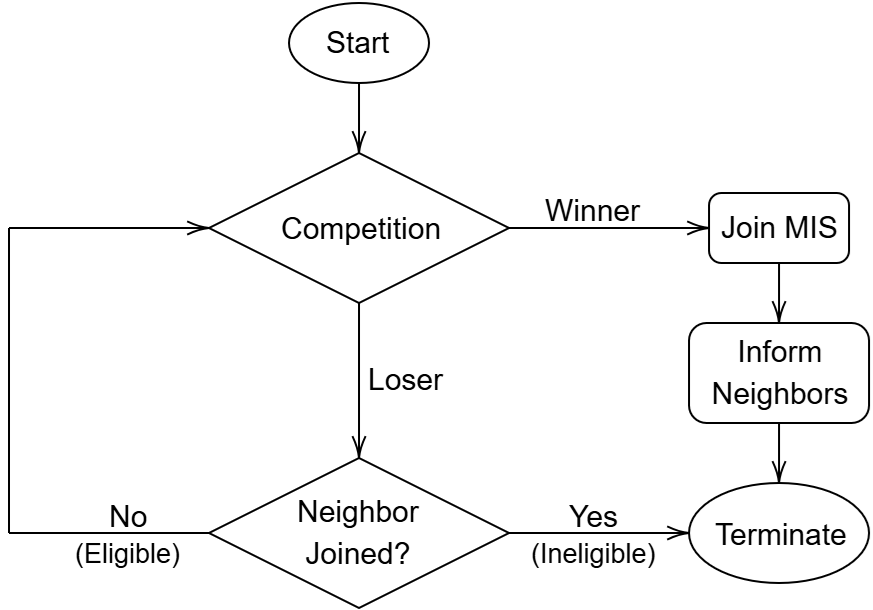}
    \caption{Flowchart for our CD algorithm, Algorithm~\ref{alg:MIS}.}
        \label{fig:CD-flow}
\end{minipage}

\vspace{0.5cm}
\begin{minipage}[c]{\linewidth}
\centering
\includegraphics[width=0.8\linewidth]{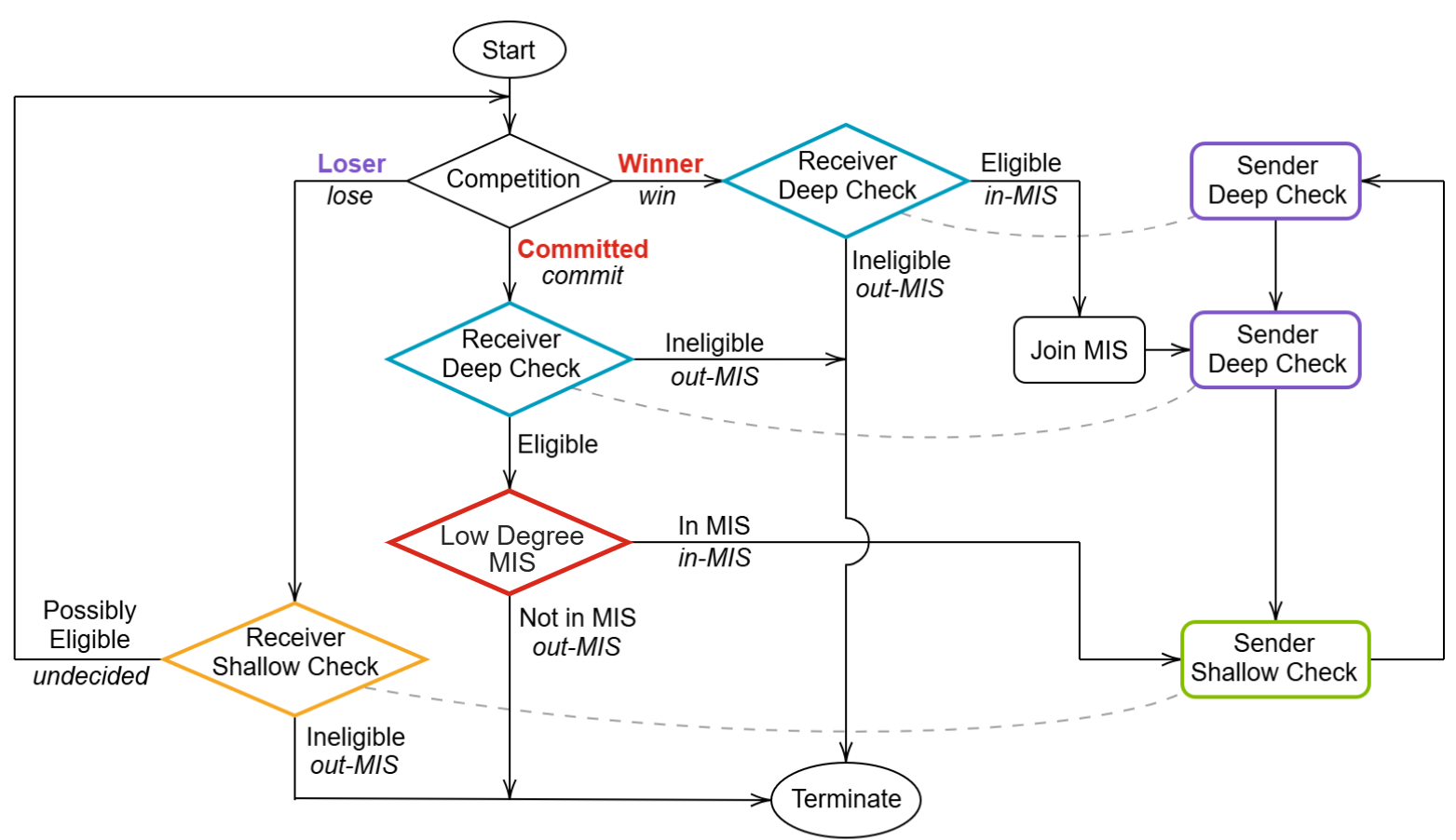}
    \caption{Flowchart for our no-CD algorithm, Algorithm~\ref{alg:no-CD-MIS}. Dashed lines indicate concurrent execution. Energy usage is color-coded as follows: {\color{flowchartred} $O(\log^2 n \log \log n)$}, {\color{flowchartblue} $O(\log n \log \Delta)$}, {\color{flowchartpurple} $O(\log n)$}, {\color{flowchartorange} $O(\log \Delta)$}, {\color{flowchartgreen} $O(1)$}.}
    \label{fig:noCD-flow}
\end{minipage}%
\end{figure}



\subsubsection{Checking}
The second area of concern is in the notification process at the end of each Luby phase.
All nodes who lost in the competition now listen, because they might have a neighbor in the output set. However, if they have no such neighbors, then they listen for the full $O(\log n \log \Delta)$ rounds. Crucially, if this happens in every phase, nodes would end up using too much energy. And yet, this seems necessary, as we should ensure that any node that neighbors an MIS node no longer participates in the following competitions.

However, we give up on that property. More concretely, at the end of each Luby phase, losing nodes perform a \emph{``shallow'' check} to detect the presence of MIS neighbors with only constant probability --- via a single iteration of backoffs. This gives the neighbors of MIS nodes a constant probability to drop out, but at a vastly reduced cost. On the other hand, a winning node performs a thorough, \emph{``deep'' check} --- via $O(\log n)$ iterations of backoffs --- to detect the presence of MIS neighbors with high probability --- to decide whether it joins the output set (when there are no such neighbors) or sets its status to ``not in MIS.''

\subsection{Algorithm}



Next, we give our MIS algorithm: Algorithm \ref{alg:no-CD-MIS}. Figure~\ref{fig:noCD-flow} provides a graphic illustration of the flow control for Algorithm~\ref{alg:no-CD-MIS}. This algorithm relies on several constants. Some ensure our algorithm successfully computes an MIS with probability at least $1-\sfrac{1}{n}$. Increasing these constants yields better success probabilities. In particular, we choose $\beta \geq 4$, $\kappa \geq 5$ and $C \geq 4/\log(\sfrac{64}{63})$, whereas we choose $C'$ such that $\Call{\RecBackoff{}}{C' \log n,\Delta}$ succeeds with probability $1-\sfrac{1}{n^5}$. As for $\peterAlg{}$, we ensure it succeeds with probability $1-\sfrac{1}{n^2}$. 

Our other constants ensure that nodes stay synchronized throughout the Luby phase, and upper bound the round complexity for
\begin{itemize}
    \item $\Call{\SndBackoff{}}{k,\Delta}$ and  $\Call{\RecBackoff{}}{k,\Delta}$: $T_B(k) = k \lceil \log \Delta \rceil$,
    \item $\Call{Competition}{\Delta}$: $T_C = \beta \log^2 n \log \Delta$,
    \item $\peterAlg{}$ on an induced subgraph of max degree $\kappa \log n$: $T_G = O(\log^2 n \log \log n)$,
    \item A single Luby phase: $T_L = T_C + 2 T_B(C' \log n) + T_G + T_B(1) = O(\log^2 n \log \Delta)$.
\end{itemize}

\begin{algorithm}
\caption{Distributed Maximal Independent Set (MIS) no-CD Algorithm}\label{alg:no-CD-MIS}
\begin{algorithmic}[1]
\State $status \gets \undecided{}$
\For{Luby phase $i \gets 1$ to $C \log n$}
    \If{$status = \undecided{}$}
         \Call{Competition}{$\Delta$}
    \Else
         \; \sleep{} until round $(i - 1) T_L + T_C$ 
    \EndIf

    \Statex 

    \State{} $heard \gets$ False   \Comment{All nodes are synchronized here} 
    
    \If{$status = \inMIS{}$}
         $\Call{\SndBackoff{}}{C' \log n,\Delta}$
    \ElsIf{$status = \win{}$}
        \State $heard \gets \Call{\RecBackoff{}}{C' \log n,\Delta}$ \Comment{Deep check for MIS neighbors}
        \If{heard}
            $status \gets \notInMIS{}$, then terminate early
        \Else
            \; $status \gets \inMIS{}$ \label{line:decide1}
        \EndIf
    \Else
         \; \sleep{} until round $(i - 1) T_L + T_C + T_B(C' \log n) $
    \EndIf
 
    \Statex 
    
    \If{$status = \inMIS{}$} \Comment{All nodes are synchronized here}
        \State $\Call{\SndBackoff{}}{C' \log n,\Delta}$
        \State{\sleep{} until round $(i - 1) T_L + T_C + 2 T_B(C' \log n) + T_G$}
    \ElsIf{$status = \commit{}$} 
        \State $heard \gets \Call{\RecBackoff{}}{C' \log n,\Delta}$ \Comment{Deep check for MIS neighbors}
        \If{heard}
            \State $status \gets \notInMIS{}$, then terminate early
        \Else
            \State $status \gets \undecided{}$
            \State \peterAlg{} \Comment{Run on subgraph of maximum degree $O(\log n)$} \label{line:decide2}
        \EndIf
    \Else
         \; \sleep{} until round $(i - 1) T_L + T_C + 2 T_B(C' \log n) + T_G$
    \EndIf
    
    \Statex 

    \If{$status = \inMIS{}$} \label{line:shallowCheck}
        $\Call{\SndBackoff{}}{1,\Delta}$ \Comment{All nodes are synchronized here}
    \Else
        \State $heard \gets \Call{\RecBackoff{}}{1,\Delta}$ \Comment{Shallow check for MIS neighbors}
        \If{heard}
            $status \gets \notInMIS{}$, then terminate early
        \Else
            \; $status \gets \undecided{}$
        \EndIf
    \EndIf
\EndFor
\end{algorithmic}
\end{algorithm}

\begin{algorithm}
\caption{COMPETITION}
\begin{algorithmic}[1]
\Procedure{Competition}{$\Delta$}

\State{ $\DeltaEst \gets \Delta$,  $heard \gets$ False}
\State{ $x \gets $ random string of $\beta \log n$ bits}
\For{Bitty phase $j \gets 1$ to $\beta \log n$}
    \If{$status = \lose$} \sleep{}
    \ElsIf{$x_j = 1$}
        $\Call{\SndBackoff{}}{C' \log n,\Delta}$
    \Else
        \State $heard \gets heard \, \vee \, \Call{\RecBackoff{}}{C' \log n,\Delta, \DeltaEst}$ \Comment{Logical OR}
        \If{$heard$ and $status \ne commit$} 
            \State {$status \gets$ \lose}
        \ElsIf{not $heard$} \Comment{Not hearing implies $O(\log n)$ undecided neighbors.}
            \State{$\DeltaEst \gets \min\{\Delta, \kappa \log n\}$}  
            \State{$status \gets \commit$}
        \EndIf
    \EndIf
\EndFor
\If{not $heard$} \Comment{Nodes that heard nothing win, including committed ones.}
    \State{$status \gets \win$}
\EndIf
\EndProcedure
\end{algorithmic}
\end{algorithm}

\subsection{Properties of a Luby Phase}

We now show properties that pertain to a single Luby phase $i$. More precisely, we show properties on sets $C_i$ and $W_i$ --- these are defined as the set of undecided nodes that run the competition in the $i$th Luby phase and subsequently set their status to \commit{} and \win{}, respectively. First, we consider the sets $C_i$ and show that neighboring, committed nodes must have committed in the same Bitty phase with high probability. This helps us show that for any Luby phase $i$, $C_i$ induces a logarithmic degree subgraph. That is, with high probability, among the neighbors of any given committed node $v$, there can be at most $O(\log n)$ nodes with $status \ne \lose{}$. This justifies our reduction of the degree estimate to $\kappa \log n$ (in our case, $\kappa \geq 5$) whenever a node sets its status to \commit{}.

\begin{lemma} \label{lem:Ci-neighborsSimultaneousCommits}
Consider a single call to $\Call{Competition}{}$. Let $u,v$ be two neighboring nodes that set their status to \commit{}. Then, with probability at least $1 - 2/n^{5}$, $u$ and $v$ set their status to \commit{} in the same Bitty phase.

\end{lemma}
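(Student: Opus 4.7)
The plan is to suppose, without loss of generality, that $u$ commits at Bitty phase $j_u$ and $v$ commits at Bitty phase $j_v$ with $j_u < j_v$, and derive that a low-probability failure event of the receiver-side backoff must have occurred. By symmetry, the case $j_v < j_u$ contributes at most the same failure probability, so a union bound over these two disjoint cases will yield the desired $2/n^{5}$ bound.

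The key step will be to argue that, in Bitty phase $j_u$, node $v$ must have $x^v_{j_u} = 1$ and therefore call $\Call{\SndBackoff{}}{C' \log n, \Delta}$. To establish this, I will track $v$'s status across phases $1, \dots, j_u$. Because $v$ eventually commits at phase $j_v > j_u$, it cannot have entered the \lose{} state before phase $j_v$ (an already \lose{} node sleeps for the remainder of the competition and never subsequently commits), and by assumption it has not yet committed. Hence $v$'s status is \undecided{} at the start of phase $j_u$. If $x^v_{j_u}$ were equal to $0$, then $v$ would listen; its $\Call{\RecBackoff{}}{}$ would return either true (causing $v$ to transition to \lose{}) or false (causing $v$ to commit at phase $j_u$). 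Both alternatives contradict the setup, so $x^v_{j_u} = 1$ as claimed.

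With $v$ transmitting while $u$ listens at phase $j_u$, I will then apply Lemma~\ref{lem:ETRBackoffCorrectness} to $u$'s call $\Call{\RecBackoff{}}{C' \log n, \Delta, \DeltaEst}$. Since $u$ has not committed by phase $j_u$, its $\DeltaEst$ still equals $\Delta$, and the number of $u$'s neighbors simultaneously invoking $\Call{\SndBackoff{}}{}$ is trivially at most $\Delta$, so the lemma's hypotheses are met. The lemma then guarantees that, except with probability at most $(7/8)^{C' \log n} \leq 1/n^{5}$ (by the choice of $C'$), $u$'s $\Call{\RecBackoff{}}{}$ returns true. But in that event, $heard$ becomes true for $u$ at phase $j_u$, and $u$ cannot set its status to \commit{} at this phase --- a contradiction. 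Combining the two symmetric cases via union bound completes the proof.

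The main subtlety, I expect, is in verifying the status-tracking step carefully: one must rule out every execution trace in which $v$ avoids both \lose{} and \commit{} through phase $j_u - 1$ yet somehow still has $x^v_{j_u} = 0$. Because the algorithm's status transitions within a competition are monotone (\lose{} and \commit{} are absorbing, and \undecided{} can only transition forward), this case analysis is short, but it is the one place where reading off the pseudocode attentively matters.
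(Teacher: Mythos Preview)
Your proposal is correct and follows the natural line of argument. The paper defers this proof to the full version, but the structure you outline---showing that a node commits precisely at its first $0$ bit, so if $j_u < j_v$ then $v$ must be a sender at phase $j_u$ while $u$ listens with $\DeltaEst = \Delta$, whence Lemma~\ref{lem:ETRBackoffCorrectness} and the choice of $C'$ force $u$'s $\Call{\RecBackoff{}}{}$ to return true except with probability $1/n^5$---is exactly what the authors would do, and the union bound over the two orderings yields the stated $2/n^5$.
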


\begin{lemma} \label{lem:Ci-LowDegree}
Consider a single call to $\Call{Competition}{}$.
Let $\kappa$ be any strictly positive integer, and let $B$ be the event that there exists a node $v$ such that more than $\kappa \log n$ neighbors of $v$ do not have status \emph{\lose{}} in the (Bitty) phase of $v$'s first $0$ bit, and $v$ sets its status to \emph{\commit{}}. Then, $\Prob{B} \leq 4/n^{4}$.
\end{lemma}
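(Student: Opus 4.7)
The plan is to first reduce to a clean event in which every call to $\RecBackoff{}$ executed during the current invocation of Competition returns the correct answer in the sense of Lemma~\ref{lem:ETRBackoffCorrectness}. Since there are at most $n\cdot\beta\log n$ such calls and each fails with probability at most $1/n^5$ by the choice of $C'$, a union bound shows this clean event fails with probability at most $\beta\log n/n^4$. Conditioned on this event, each listening node hears a transmission iff some neighbor actually transmitted in that round, so the node statuses are pinned down by the random bit strings alone.

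For each node $u$, let $k_u\in\{1,\dots,\beta\log n\}\cup\{\infty\}$ denote the position of $u$'s first $0$ bit. I would next establish two structural facts. \emph{Fact (i):} a neighbor $u$ of $v$ is non-\lose{} at the start of the Bitty phase $j^*=k_v$ iff $k_u\ge k_v$, since $v$ transmits in every phase $1,\dots,k_v-1$, so any $u$ with $k_u<k_v$ listens at phase $k_u$, hears $v$, and becomes \lose{} (and cannot have already committed, because $u$'s first $0$ bit is exactly at $k_u$). \emph{Fact (ii):} $v$ sets its status to \commit{} iff no neighbor $u$ has $k_u>k_v$, because any such $u$ would be non-\lose{} at phase $k_v$ and would transmit in that phase, which $v$ would hear. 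Combining these, conditional on $v$ committing, the non-\lose{} neighbors of $v$ at phase $k_v$ are exactly those $u\in N(v)$ with $k_u=k_v$; denote this count by $X_v$.

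Fix $v$ of degree $d_v$, and for each candidate value $m$ of $k_v$ set $\mu_m:=d_v\cdot 2^{-m}$. I would bound $\Prob{v\text{ commits at }m \text{ and } X_v>\kappa\log n}$ by splitting into two regimes around a threshold $C_1$ chosen large.

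\emph{Regime A ($\mu_m\ge C_1\ln n$):} Dropping the constraint on $X_v$, $\Prob{v\text{ commits at }m}\le 2^{-m}(1-2^{-m})^{d_v}\le 2^{-m}e^{-\mu_m}\le 2^{-m}n^{-C_1}$, and summing over $m$ in this regime gives at most $n^{-C_1}$.

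\emph{Regime B ($\mu_m<C_1\ln n$):} Conditional on $v$ committing at phase $m$, each neighbor $u$ has $k_u$ distributed on $\{1,\dots,m\}$ with $\Prob{k_u=m\mid k_u\le m}=2^{-m}/(1-2^{-m})\le 2\cdot 2^{-m}$, so $X_v$ is stochastically dominated by $\mathrm{Bin}(d_v,2\cdot 2^{-m})$ with mean at most $2\mu_m<2C_1\ln n$. A Chernoff bound then yields $\Prob{X_v>\kappa\log n\mid v\text{ commits at }m}\le n^{-c_\kappa}$ for a constant $c_\kappa$ that grows with $\kappa$; multiplying by $\Prob{v\text{ commits at }m}\le 2^{-m}$ and summing over $m$ contributes at most $n^{-c_\kappa}$.

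Choosing $C_1$ and $\kappa$ large enough that both $n^{-C_1}$ and $n^{-c_\kappa}$ are at most $n^{-5}$, a union bound over the $n$ choices of $v$ yields at most $2/n^4$, and adding in the clean-event failure probability gives $\Prob{B}\le 4/n^4$. The main obstacle will be the intermediate range where $\mu_m$ is near the threshold $C_1\ln n$: there the commit-probability factor $e^{-\mu_m}$ from Regime A and the Chernoff tail of Regime B must jointly carry the argument, and handling this tightly is what pins down the admissible constants $\kappa$ and $C_1$.
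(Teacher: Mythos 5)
Your structural reduction (Facts (i) and (ii)) is sound and is essentially the right skeleton, paralleling Lemma~\ref{lem:Ci-neighborsSimultaneousCommits}: once the relevant backoffs behave, a node $v$ that commits at its first $0$ bit, in position $m$, has as non-\lose{} neighbors exactly those $u$ with $k_u=m$. Two accounting issues, though. First, you only need the correctness of one \RecBackoff{} call per node --- the call each node makes at its \emph{own} first $0$ bit, where $\DeltaEst=\Delta$ so the hypothesis of Lemma~\ref{lem:ETRBackoffCorrectness} holds; your clean event over all $n\beta\log n$ calls is both unjustified for the post-commit calls (there ``at most $\DeltaEst$ senders'' is exactly what this lemma is meant to establish) and already costs $\beta\log n/n^4$, which by itself exceeds the claimed $4/n^4$. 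Second, rather than ``conditioning'' on the clean event and then treating the bit strings as unconditioned, write $\Prob{B}\le \Prob{\text{not clean}}+\Prob{B\cap \text{clean}}$ and note that $B\cap\text{clean}$ is contained in a pure bit-string event.

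The genuine gap is the tail estimate, which as set up cannot deliver the stated bound for the $\kappa$ the paper uses. Regime A forces $C_1\ge 5$, and then in Regime B the conditional mean of $X_v$ may be as large as $2C_1\ln n\approx 1.39\,C_1\log n>5\log n$, so for $\kappa=5$ (the paper's choice, $\kappa\ge 5$) the probability $\Prob{X_v>\kappa\log n\mid \text{commit at }m}$ is not small at all; your own closing remark concedes the boundary regime is unresolved, and with these regimes the argument only closes for a much larger constant $\kappa$. The fix needs no case split: a neighbor $u$ can be non-\lose{} at phase $m$ only if its first $m-1$ bits are all $1$, and conditioned on that prefix its $m$-th bit is an independent fair coin deciding between $k_u=m$ (it counts toward $X_v$) and $k_u>m$ (it transmits at phase $m$ and, on the clean event, prevents $v$ from committing). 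Hence, with $T$ the number of prefix-matching neighbors, $\Prob{X_v>\kappa\log n \text{ and } v\text{ commits at } m}\le \sum_{t>\kappa\log n}\Prob{T=t}\,2^{-t}\le n^{-\kappa}$; summing over $m$ with weight $\Prob{k_v=m}=2^{-m}$ and union bounding over $v$ gives $n^{1-\kappa}\le n^{-4}$ for $\kappa\ge 5$, which together with the per-node backoff failures ($\le 2/n^4$) yields $4/n^4$. (Note also that no proof can give the fixed bound $4/n^4$ for \emph{every} positive integer $\kappa$ --- disjoint stars show the per-node failure probability is genuinely of order $n^{-\kappa}$ --- so the statement implicitly requires $\kappa\ge 5$, and your proof should reach that value rather than an unspecified ``sufficiently large'' $\kappa$.)
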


\begin{corollary}
\label{cor:goodDegreeEstimate}
    For any Luby phase $i$, each statement below holds with probability $1-4/n^4$:
    \begin{enumerate}
        \item During the competition, for Bitty phase $j$ and node $v$ (such that $v$'s status is not \lose{}), the degree estimate of $v$ upper bounds the number of $v$'s awake neighbors (i.e., starting the Bitty phase with $status \neq \lose{}$).
        \item The subgraph induced by $C_i$ has maximum degree $O(\log n)$.
    \end{enumerate}
\end{corollary}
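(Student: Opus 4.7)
Both statements follow by assembling the bookkeeping on top of Lemmas~\ref{lem:Ci-LowDegree} and~\ref{lem:Ci-neighborsSimultaneousCommits}, so the plan is essentially to identify the right events to condition on and union-bound over.

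For statement 1, the plan is to observe that inside $\Call{Competition}{}$ the degree estimate of a node $v$ is updated in exactly one place: when $v$ first executes the ``not heard'' branch on a $0$-bit Bitty phase, at which moment $\DeltaEst$ is replaced by $\min\{\Delta,\kappa \log n\}$ and $status$ is set to \commit{}. Before that moment the estimate is $\Delta$, which trivially upper bounds the number of $v$'s neighbors, awake or not. For the estimate to remain an upper bound after $v$ commits, I invoke Lemma~\ref{lem:Ci-LowDegree}: with probability at least $1-4/n^4$, at the Bitty phase in which $v$ commits, $v$ has at most $\kappa \log n$ neighbors whose status is not \lose{}. The key monotonicity observation is that during a single execution of $\Call{Competition}{}$, nodes only transition \emph{into} \lose{} (never out of it), so the set of non-\lose{} neighbors of $v$ shrinks weakly from one Bitty phase to the next. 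Combined with the fact that \lose{} nodes sleep through the remainder of the competition (hence the ``awake'' neighbors at the start of any subsequent Bitty phase are exactly the non-\lose{} ones), this gives statement 1 with the claimed probability.

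For statement 2, the plan is to combine both lemmas. Fix any Luby phase $i$ and any $v \in C_i$; I want to show $v$ has $O(\log n)$ neighbors in $C_i$. Apply Lemma~\ref{lem:Ci-neighborsSimultaneousCommits} to every pair of neighbors that end up committed: each such pair commits in the same Bitty phase with probability at least $1-2/n^5$, and a union bound over the at most $\binom{n}{2}$ pairs of vertices makes this hold simultaneously for all committed neighbor pairs with failure probability at most $1/n^3$. Conditioning on this event, every committed neighbor of $v$ commits in $v$'s first $0$-bit Bitty phase, and in particular is not \lose{} at the start of that phase. Then Lemma~\ref{lem:Ci-LowDegree} applied at $v$ gives that, with failure probability at most $4/n^4$, the number of non-\lose{} neighbors of $v$ at its commit phase is at most $\kappa \log n$, and therefore so is $|N(v) \cap C_i|$.

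The main (mild) obstacle is matching the exact $1-4/n^4$ figure for statement 2 rather than a slightly weaker bound such as $1-4/n^4-1/n^3$. I expect this is handled either by a tighter union bound (counting edges of $G$ rather than arbitrary vertex pairs and absorbing the lower-order term into the freely chosen constants $\beta,\kappa,C'$ that drive the failure probabilities in the underlying lemmas), or by re-stating statement 2 as holding under the same event that already gives statement 1. Either way, the structural argument --- simultaneous commits force committed neighbors into the ``first $0$-bit phase'' snapshot, and Lemma~\ref{lem:Ci-LowDegree} bounds that snapshot --- is unchanged.
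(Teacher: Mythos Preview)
Your argument for statement~1 is correct and matches what the paper's structure implies: the corollary is labeled as such and carries exactly the $4/n^4$ failure probability of Lemma~\ref{lem:Ci-LowDegree}, so both parts are meant to follow from the single event $\neg B$ of that lemma, together with the monotonicity observation you make (nodes only transition \emph{into} \lose{}, never out of it).

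For statement~2, your proof is correct but takes an unnecessary detour through Lemma~\ref{lem:Ci-neighborsSimultaneousCommits}, and this is exactly the source of the probability mismatch you flag at the end. The simpler route is to observe that inside $\Call{Competition}{}$, both \lose{} and \commit{} are absorbing states; consequently, any node $u\in C_i$ has $status\neq\lose$ throughout \emph{every} Bitty phase, and in particular at the Bitty phase of $v$'s first~$0$ bit. Hence $N(v)\cap C_i$ is already contained in the set of non-\lose{} neighbors of $v$ at that phase, and Lemma~\ref{lem:Ci-LowDegree} bounds its size by $\kappa\log n$ directly, on the very same event $\neg B$. This gives the exact $1-4/n^4$ bound with no extra union bound, and no need to argue about simultaneous commits at all. (Lemma~\ref{lem:Ci-neighborsSimultaneousCommits} is presumably used in the paper as an ingredient in the \emph{proof} of Lemma~\ref{lem:Ci-LowDegree}, not in deriving the corollary from it.)
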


Let $C_i^*$ denote those nodes with status \commit{} but having not detected any MIS neighbor during Luby phase $i$. Then, Corollary~\ref{cor:goodDegreeEstimate} implies that we can run the $O(\log^2 n \log \Delta)$ round MIS algorithm described in Section \ref{subsec:DaviesAlg}, to compute an MIS on the subgraph (of the communication graph) induced by $C_i^*$ in $O(\log^2 n \log \log n)$ rounds (and energy).

We follow up by showing properties on the set $W_i$ for any Luby phase $i$. We show that in the competition, the undecided nodes with locally maximum bitstrings set their status to \win{} (and thus join set $W_i$) with high probability. This implies, among other things, that $W_i$ is not empty until no undecided nodes remain. Next, we show that in the competition, with high probability, no two neighbors set their status to \win{} (i.e., $W_i$ is independent).
 
\begin{lemma} \label{lem:Wi-notEmpty}
Consider a single call to $\Call{Competition}{}$. Let $v$ be an undecided node whose bitstring $x(v)$ is a local maximum. Then, 
$\Pr\left( \text{$v$ sets status to \win{}}\right) \ge 1 - 1/n^2$.
\end{lemma}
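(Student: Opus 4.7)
The plan is to show that, conditioned on $x(v)$ being a local maximum among undecided neighbors, the variable $heard$ at $v$ stays False throughout $\Call{Competition}{\Delta}$ with probability at least $1 - 1/n^2$; then the concluding ``if not $heard$'' conditional at the end of $\Call{Competition}{}$ sets $v$'s status to \win{}. By Lemma~\ref{lem:ETRBackoffCorrectness} (applied with $k = C' \log n$) and Corollary~\ref{cor:goodDegreeEstimate}, on a good event of probability at least $1 - O(1/n^4)$, all relevant backoff calls return the correct answer and all degree estimates remain valid upper bounds. Conditioned on this good event, it suffices to show that no neighbor of $v$ ever sends at a bit where $v$ listens.

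The core structural argument exploits that $x(v)$ is a local maximum. For each neighbor $u$, letting $p(u)$ denote the first bit position where $x(v)$ and $x(u)$ differ, we have $x_v[p(u)] = 1$ and $x_u[p(u)] = 0$. Hence $u$ can send at a bit $j$ with $x_v[j] = 0$ only if $j > p(u)$ and $u$ is still active (not in status \lose{}) at bit $j$. At bit $p(u)$ itself, $v$ sends while $u$ listens; in the good event, $u$'s RecBackoff detects $v$'s send, and provided $u$'s status is $\undecided{}$ at that moment, $u$ transitions to \lose{} and sleeps for the remainder of the competition, so cannot cause $heard(v) \gets$ True at any later bit.

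The main obstacle will be a neighbor $u$ that has already transitioned to \commit{} by bit $p(u)$: such $u$ does not transition to \lose{} upon hearing $v$, and can still send on later 1-bits. To control this, I would observe that $u$ commits only at its first 0-bit $q_u$, and the bitstring structure (agreement of $x(u)$ and $x(v)$ on bits $1,\ldots,p(u)-1$) forces $q_u = q_v$, so $x(u)$'s prefix through bit $q_v$ equals $x(v)$'s; this is also the structural content of Lemma~\ref{lem:Ci-neighborsSimultaneousCommits} applied to $v$ and $u$. The probability that $x(u)$'s prefix matches $x(v)$'s through bit $q_v$ is $2^{-q_v}$, and conditioning on $v$ being a local maximum biases $q_v$ toward larger values (typically $\Theta(\log N_v)$, where $N_v$ is $v$'s degree). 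Combining this with the $O(\log n)$ bound on committed neighbors of $v$ (from Corollary~\ref{cor:goodDegreeEstimate}), together with further per-bit failure probabilities for $v$'s RecBackoff to detect such a $u$'s subsequent send, a union bound over all failure events --- backoff failures, degree-estimate failures, simultaneous-commit failures from Lemma~\ref{lem:Ci-neighborsSimultaneousCommits}, and bit-pattern events for potentially-problematic committed neighbors --- yields the claimed $1 - 1/n^2$ bound, provided $\beta$ and $C'$ are chosen sufficiently large.
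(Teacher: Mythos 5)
Your reduction (win $\iff$ the $heard$ flag never gets set, so it suffices that no neighbor transmits in a Bitty phase where $v$ listens), and your treatment of neighbors that are still \undecided{} at the first disagreement position $p(u)$ (they hear $v$'s transmission there and move to \lose{}, up to the backoff failure probability and the validity of the degree estimates), are fine. The gap is exactly at the step you yourself flag as the main obstacle: your proposed resolution for committed neighbors is quantitatively unsound. A neighbor $u$ with $p(u)>q_v$ must share $v$'s prefix $1^{q_v-1}0$, so the expected number of such neighbors is roughly $\deg(v)\cdot 2^{-q_v}$; but $q_v$ is of order $\log\deg(v)$ precisely because $v$ heard nothing at its first $0$ bit (this is the content of Lemma~\ref{lem:Ci-LowDegree}), and conditioning on $x(v)$ being a local maximum shifts $q_v$ only by an additive constant. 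So the quantity you want to union-bound is $\Theta(1)$, not $O(1/n^{2})$, and no choice of $\beta$, $C'$ repairs this; the $O(\log n)$ bound on committed neighbors from Corollary~\ref{cor:goodDegreeEstimate} does not help either, since a single such neighbor is enough to set $v$'s $heard$ flag. Concretely, take $v$ of degree one whose unique undecided neighbor $u$ satisfies $x_u[1]=x_v[1]=0$, an event of constant probability even conditioned on $x(v)>x(u)$: both commit silently in phase $1$, at phase $p(u)$ node $u$ hears $v$ but, having status \commit{}, does not become \lose{}, and with constant probability there is a later position $j$ with $x_u[j]=1$ and $x_v[j]=0$, at which $v$ hears $u$ (a single-sender backoff succeeds with probability $1-1/\poly(n)$) and hence never sets \win{}. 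So ``such neighbors are rare'' is simply not a $1-1/n^{2}$ statement.

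What the argument needs instead is a structural claim about committed neighbors, not a rarity claim: one must argue that a neighbor that is still transmitting at a $0$ bit of $v$ cannot have heard $v$ earlier (equivalently, that once a committed neighbor's $heard$ flag is set by $v$'s transmission at $p(u)$, it can no longer cause $v$ to hear anything). That requires engaging carefully with what the \commit{} state does after $heard$ becomes true --- which your sketch does not do, and which is where the actual content of the lemma lies. Your appeal to Lemma~\ref{lem:Ci-neighborsSimultaneousCommits} correctly pins down \emph{when} such a neighbor commits, but it does not control its subsequent sends, and the probabilistic bookkeeping you layer on top cannot close the gap.
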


\begin{lemma} \label{lem:Wi-Independent}
For any Luby phase $i$ and two neighbors $u,v$: $\Pr\left( \text{$u \in W_i$ and $v \in W_i$} \right) \le 6/n^4$.
\end{lemma}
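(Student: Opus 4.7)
The plan is to reduce the lemma to a case analysis on the two independent random bitstrings $x(u), x(v)$ drawn during the competition, then apply the correctness guarantee of \RecBackoff{} (Lemma~\ref{lem:ETRBackoffCorrectness}) under the good degree-estimate event (Corollary~\ref{cor:goodDegreeEstimate}). The heart of the argument is that if $u,v$ are neighbors and both end up in $W_i$, then at the first Bitty phase where their bitstrings disagree, one transmits and the other listens, and the listener must mistakenly fail to hear, which is a very low-probability event.

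First I would set up the conditioning event. Let $G$ be the event that, throughout the competition in Luby phase $i$, every undecided node's $\DeltaEst$ correctly upper-bounds the number of its neighbors whose status is not \lose{} (and hence upper-bounds the number of its neighbors simultaneously calling \SndBackoff{}). By Corollary~\ref{cor:goodDegreeEstimate}, $\Pr(\bar G) \le 4/n^4$. Conditioning on $G$ is crucial because it is precisely the hypothesis required to invoke Lemma~\ref{lem:ETRBackoffCorrectness}.

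Next I would split on the comparison of the bitstrings. Since the two strings are drawn independently and uniformly from $\{0,1\}^{\beta \log n}$, and $\beta \ge 4$, a direct count gives $\Pr(x(u) = x(v)) = 2^{-\beta \log n} \le 1/n^4$. Otherwise, let $j^*$ be the first index where $x(u)$ and $x(v)$ disagree; by symmetry assume $x(u)[j^*] = 1$ and $x(v)[j^*] = 0$. I would argue that the event $\{u,v \in W_i\}$ forces, at Bitty phase $j^*$, that neither node has yet set its status to \lose{} (otherwise it could not end as \win{}), so $u$ calls $\Call{\SndBackoff{}}{C' \log n, \Delta}$ while $v$ calls $\Call{\RecBackoff{}}{C' \log n, \Delta, \DeltaEst}$. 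Moreover, $v \in W_i$ requires $v$'s $heard$ flag to remain False, so $v$'s \RecBackoff{} must return False at Bitty phase $j^*$. Under $G$, $v$'s $\DeltaEst$ upper-bounds the number of its simultaneous \SndBackoff{}-calling neighbors, so Lemma~\ref{lem:ETRBackoffCorrectness} applies and---since $u$ is one such caller---the probability that \RecBackoff{} returns False is at most $(7/8)^{C' \log n} \le 1/n^5$ by the choice of $C'$.

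Finally, a union bound over the three bad events ($\bar G$, equal bitstrings, and a \RecBackoff{} failure at $j^*$) yields
\[
\Pr(u \in W_i,\, v \in W_i) \;\le\; \frac{4}{n^4} + \frac{1}{n^4} + \frac{1}{n^5} \;\le\; \frac{6}{n^4}.
\]
The main obstacle I anticipate is the careful bookkeeping in the second sub-case: I need to argue cleanly that $\{u,v \in W_i\}$, together with $x(u)\ne x(v)$ and $G$, genuinely implies the scenario ``$u$ calls \SndBackoff{} and $v$ calls \RecBackoff{} under a valid $\DeltaEst$ at phase $j^*$, with the former being one of the latter's simultaneous senders.'' In particular, one must rule out that $u$ or $v$ could already be in status \lose{} at $j^*$, or that $v$'s being in status \commit{} at $j^*$ somehow invalidates the application of Lemma~\ref{lem:ETRBackoffCorrectness}; both are handled by noting that \commit{} nodes still call \RecBackoff{} (with the tightened $\DeltaEst$ that $G$ certifies), and that any prior hearing by $v$ would already preclude $v \in W_i$.
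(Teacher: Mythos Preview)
Your proposal is correct and matches the approach implicit in the paper. Although the paper defers this proof to its full version, the constants are calibrated precisely for your three-term union bound: $\beta \ge 4$ gives $\Pr(x(u)=x(v)) \le n^{-\beta} \le 1/n^4$, the choice of $C'$ makes the \RecBackoff{} failure at the first-disagreement Bitty phase cost at most $1/n^5$, and Corollary~\ref{cor:goodDegreeEstimate} supplies the $4/n^4$ for the bad degree-estimate event, summing to at most $6/n^4$.
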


Finally, we highlight that any node that attempts to join the MIS following the competition (i.e., that is in $W_i \cup C_i$) is decided by the end of that Luby phase.

\begin{lemma} \label{lem:WiCi-MustDecide}
For any Luby phase $i$, any node in $W_i \cup C_i$ decides by the end of that phase with probability 1.
\end{lemma}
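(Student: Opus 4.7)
The plan is to do a case analysis on whether $v$ ends the competition with status \win{} (so $v \in W_i$) or with status \commit{} (so $v \in C_i$), then trace $v$'s execution through the remainder of Luby phase $i$ in Algorithm~\ref{alg:no-CD-MIS}. The argument is essentially a deterministic control-flow unwind: once $v$ leaves $\Call{Competition}{\Delta}$ in one of these two states, the code paths available to it either terminate $v$ as \notInMIS{} or culminate in an assignment to \inMIS{}, irrespective of the random bits. First I would verify that $W_i$ and $C_i$ are disjoint --- a node with final competition status \win{} must have had $heard = \text{False}$ throughout (otherwise the terminal re-assignment ``if not heard, status $\gets$ \win{}'' would not fire), whereas any node in $C_i$ must have had $heard$ flip to True during some Bitty phase \emph{after} it had already committed, which is precisely what prevents the terminal re-assignment from overwriting its \commit{} status.

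For $v \in W_i$, $v$ enters the first post-competition block with status \win{} and performs $\Call{\RecBackoff{}}{C' \log n, \Delta}$ as a deep check. Whatever the result, $v$ is assigned a decided status: if $heard$ is True then $status \gets \notInMIS{}$ and $v$ terminates early; otherwise $status \gets \inMIS{}$ at Line~\ref{line:decide1}. So $v$ has decided by the end of the first post-competition block.

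For $v \in C_i$, $v$ sleeps through the first post-competition block, then in the second block performs its own deep check. If $heard$ is True, then $status \gets \notInMIS{}$ and $v$ terminates early. Otherwise, $v$ resets $status \gets \undecided{}$ and invokes $\peterAlg{}$ at Line~\ref{line:decide2}, which runs for exactly $T_G$ rounds on the residual subgraph and, by construction, assigns every participating node a binary \inMIS{}/\notInMIS{} output by its end. So $v$ again leaves Luby phase $i$ with a decided status.

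The main (and really, only) obstacle is to justify that $\peterAlg{}$ \emph{deterministically} assigns each participating node a status in $\{\inMIS{}, \notInMIS{}\}$ by the end of its fixed-length execution, rather than only with high probability. This is a structural property of the algorithm of Section~\ref{subsec:DaviesAlg}: the modified simulation of Ghaffari's algorithm runs for a fixed number of rounds and forces a binary output per node at termination, and the failure-probability bound concerns only the \emph{correctness} of the resulting set as an MIS on the induced subgraph of $C_i^*$, not whether each node commits to a decision. With that in hand, both branches of the committed case end in a decided status, completing the lemma.
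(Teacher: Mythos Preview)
The paper defers the proof of this lemma to its full version, so there is no in-paper argument to compare against; your control-flow trace through Algorithm~\ref{alg:no-CD-MIS} is the natural (and almost certainly the intended) route for a ``with probability~$1$'' statement of this kind, and your identification of the one genuine obligation---that \peterAlg{} always commits each participant to a status in $\{\inMIS{},\notInMIS{}\}$, with the failure probability pertaining only to correctness of the resulting set---is exactly right.

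One small point you skate past: for $v \in C_i$ that runs \peterAlg{} and is assigned \notInMIS{}, your argument stops at the end of the \peterAlg{} call, but the Luby phase is not yet over---the shallow-check block at Line~\ref{line:shallowCheck} still follows. If such a $v$ were to reach that block with $status = \notInMIS{} \ne \inMIS{}$, the else-branch would execute and, on not hearing, reset $status \gets \undecided{}$, undoing the decision. The resolution is that the paper's uniform convention (visible in every explicit \notInMIS{} assignment in Algorithm~\ref{alg:no-CD-MIS}) is that \notInMIS{} entails immediate early termination, so $v$ never reaches the shallow check; it is worth stating this explicitly, since the claim is that $v$ is decided \emph{at the end of the phase}, not merely at the end of \peterAlg{}. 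Your $W_i$ case and the disjointness discussion are otherwise fine (indeed, the disjointness check, while correct, is not strictly needed for the case split).
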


\subsection{Analysis}

We now prove our main result, Theorem \ref{thm:MISnoCD}. We start with auxiliary lemmas that help prove the correctness. Their proofs can be found in Appendix~\ref{app:proofs}. First, we show that the set of nodes with status \inMIS{} stays independent throughout the algorithm with high probability.

\begin{lemma} \label{lem:maintainsMIS}
For any Luby phase $i$, at the start, the nodes with status \inMIS{} form an independent set with probability at least $1-O(\log n)/n^2$.
\end{lemma}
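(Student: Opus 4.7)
\medskip

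\noindent\textbf{Proof proposal.} The plan is to proceed by induction on the Luby phase index $i$, showing that a single phase introduces only $O(1/n^2)$ additional failure probability, so that after $i \le C \log n$ phases the total failure probability is $O(\log n)/n^2$. The base case $i=1$ is trivial because no node has status \inMIS{} before the first phase.

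For the inductive step, fix phase $i-1$ and condition on the high-probability event of the inductive hypothesis (the pre-phase \inMIS{} set is independent). The new \inMIS{} nodes produced during phase $i-1$ come from exactly two places in Algorithm~\ref{alg:no-CD-MIS}: Line~\ref{line:decide1}, where a winner in $W_{i-1}$ joins after passing its deep check against the already-committed MIS, and Line~\ref{line:decide2}, where a committed node in $C_{i-1}^*$ (one that passed its deep check) is returned by \peterAlg{} on the subgraph induced by $C_{i-1}^*$. I will enumerate the possible ways the independent-set property can be violated by the end of the phase and union-bound over them:
\begin{enumerate}
    \item Two adjacent nodes both end up in $W_{i-1}$ and both pass the deep check. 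Lemma~\ref{lem:Wi-Independent} gives probability at most $6/n^4$ per edge, i.e.\ $O(1/n^2)$ total by union bound over $O(n^2)$ edges.
    \item A winner $v \in W_{i-1}$ is adjacent to an old \inMIS{} node $u$. In the first deep-check block, $u$ runs $\Call{\SndBackoff{}}{C'\log n,\Delta}$ while $v$ runs $\Call{\RecBackoff{}}{C'\log n,\Delta}$; by Lemma~\ref{lem:ETRBackoffCorrectness} with the chosen constant $C'$, $v$ fails to hear $u$ with probability at most $1/n^5$, giving $O(1/n^3)$ over all such pairs.
    \item A node added to the MIS via \peterAlg{} is adjacent either to an old \inMIS{} node or to a new winner from step~(2). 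In the second deep-check block of the phase, every current \inMIS{} node (old MIS $\cup$ new winners) transmits via $\Call{\SndBackoff{}}{C'\log n,\Delta}$, while each committed node listens via $\Call{\RecBackoff{}}{C'\log n,\Delta}$. Any violation requires the listener to miss its transmitting \inMIS{} neighbor, which again has probability at most $1/n^5$ per pair.
    \item Two adjacent nodes both selected by \peterAlg{} on $C_{i-1}^*$. By the correctness guarantee of \peterAlg{} (tuned to succeed with probability at least $1-1/n^2$ as specified in the algorithm description), this has probability at most $1/n^2$.
\end{enumerate}

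A union bound over (1)--(4) yields an additional failure probability of $O(1/n^2)$ for phase $i-1$. Combining with the inductive hypothesis and summing over at most $C \log n$ phases gives the claimed bound of $1 - O(\log n)/n^2$. The main obstacle, as I see it, is being careful about the exact event being conditioned on and making sure that cases (2) and (3) treat the \emph{current} \inMIS{} set correctly: in particular, in step 3 of a phase the \inMIS{} set already includes the winners that just joined in step~2, so committed nodes can in fact detect adjacent new winners as well as adjacent old MIS members. A minor care point is also that failures of Lemmas~\ref{lem:Ci-neighborsSimultaneousCommits}--\ref{lem:Wi-notEmpty} in prior phases are absorbed by the same $O(\log n)/n^2$ budget, since each contributes only inverse-polynomial failure per phase.
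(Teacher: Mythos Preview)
Your proposal is correct and follows essentially the same approach as the paper: induction on the phase index, with a per-phase case analysis separating winner--winner conflicts (via Lemma~\ref{lem:Wi-Independent}), winner--old-MIS conflicts (via the first deep check and Lemma~\ref{lem:ETRBackoffCorrectness}), committed--MIS conflicts (via the second deep check), and committed--committed conflicts (via the correctness of \peterAlg{}), each contributing $O(1/n^2)$ failure, summed over $O(\log n)$ phases. Your explicit remark that the second deep-check block lets committed nodes detect winners that \emph{just} joined the MIS is exactly the point the paper handles by defining $W_i^*$ and arguing about $M_i \cup W_i^*$ before turning to $C_i^*$.
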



It remains to show that the logarithmic number of Luby phases of Algorithm \ref{alg:no-CD-MIS} suffices for all nodes to become decided with high probability (as either \inMIS{} or \notInMIS{}).
To do so, we follow the lines of the classical Luby analysis. In other words, we consider the residual graphs, whose definition follows, and show that the number of edges in the residual graphs decreases by a constant fraction every phase (see Lemma \ref{lem:eliminate-edges}). 

\begin{definition}
    Let $V_0 = V$ and $1 \le i \le C \log n$. Let $V_i$ denote the set of vertices that at the end of Luby phase $i$, have $status \ne \notInMIS{}$. We call the subgraph of $G$ induced by $V_i$ the \emph{residual graph} at the end of Luby phase $i$, and denote it $G_i = (V_i, E_i)$.
\end{definition}

Note that, in contrast to the definition in Section \ref{sec:CDMIS}, the residual graph contains the decided MIS nodes --- as here, MIS nodes do not terminate early --- as well as undecided nodes (i.e., with $status = \undecided{}$) that have an MIS neighbor but do not know it yet. (This can happen because MIS nodes inform their neighbors via shallow checks, which only succeed with constant probability per phase.) In particular, the latter nodes complicate the analysis: they continue to participate in the competition but cannot enter the MIS, and yet, the following lemma shows that such nodes have a limited impact on the progress of the algorithm. In short, they lead to a constant factor slowdown only.

\begin{lemma}
\label{lem:eliminate-edges-ideal}
For any Luby phase $1 \le i \le C \log n$, $\ExpCond{X_i}{E_{i-1}} \ge \frac{|E_{i-1}|}{8}.$
\end{lemma}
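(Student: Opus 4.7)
The plan is to adapt the classical Luby edge-shrinking argument to the no-CD setting, where the novelty is that neighbors of newly minted MIS vertices learn of them only with constant (rather than high) probability in each phase, via the shallow check. Throughout I interpret $X_i = |E_{i-1}| - |E_i|$ as the number of edges of the residual graph eliminated in phase $i$.

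First, I would apply Luby's classical good-edge decomposition on the residual graph $G_{i-1} = (V_{i-1}, E_{i-1})$. Using the standard criterion (a vertex $u$ is good if $\sum_{v \in N(u),\, d(v) \ge d(u)} \tfrac{1}{2d(v)} \ge \tfrac{1}{6}$, and an edge is good if it touches a good vertex), one obtains that at least $|E_{i-1}|/2$ of the edges of $G_{i-1}$ are good. It suffices to show that each good edge $(u,v)$ is eliminated with at least some universal constant probability $c$, from which $\ExpCond{X_i}{E_{i-1}} \ge \tfrac{c}{2}|E_{i-1}|$ follows by linearity of expectation; choosing the analysis constants then yields $c/2 \ge 1/8$.

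Second, I would show that for each good edge $(u,v)$, with constant probability a unique node $w$ in the closed neighborhood $N[u] \cup N[v]$ sets its status to $\win{}$ during the competition of phase $i$. Luby's analysis guarantees that the bit-string rank at some vertex $w$ in $N[u]$ (or $N[v]$) is a strict local maximum with constant probability; by Lemma~\ref{lem:Wi-notEmpty} such a local maximum joins $W_i$ with probability at least $1 - 1/n^2$. Lemma~\ref{lem:Wi-Independent} then rules out the presence of a second winner within distance two of $u$ (up to a $1/\poly(n)$ additive loss), so $w$ is unique in $N[u] \cup N[v]$ with high conditional probability.

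Third, I would trace the fate of $w$ after the competition. If $w$'s deep check detects no $\inMIS{}$ neighbor, then $w$ joins the MIS and subsequently runs $\Call{\SndBackoff{}}{1,\Delta}$ during the shallow check; otherwise, $w$ immediately sets its status to $\notInMIS{}$. In the first case, some vertex in $N[u] \cup N[v]$ that is now $\inMIS{}$ transmits during the shallow check, and by Lemma~\ref{lem:ETRBackoffCorrectness} with $k=1$, the opposite endpoint of $(u,v)$, if it is still undecided, hears the transmission with probability at least $1/8$ and becomes $\notInMIS{}$. In the second case, $w$ itself transitions to $\notInMIS{}$; if $w \in \{u,v\}$ the edge $(u,v)$ is already eliminated, and otherwise a symmetric shallow-check argument applies via the $\inMIS{}$ neighbor that $w$ detected. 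In every case, $(u,v)$ is removed from the residual graph with at least some constant probability.

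The main obstacle will be cleanly handling the edges of $E_{i-1}$ for which one endpoint is already $\inMIS{}$ at the start of phase $i$ (these exist because prior shallow checks succeed only with constant probability). I would dispatch these edges directly: the $\inMIS{}$ endpoint independently performs $\Call{\SndBackoff{}}{1,\Delta}$ in the shallow check regardless of the competition's outcome, so the undecided endpoint drops out with probability at least $1/8$ by Lemma~\ref{lem:ETRBackoffCorrectness}. Concurrent execution of $\peterAlg{}$ among committed nodes only adds more $\inMIS{}$ vertices; by Lemma~\ref{lem:maintainsMIS} the global collection of $\inMIS{}$ vertices remains independent with high probability, so no good-edge progress is double-counted or lost. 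Aggregating the constant elimination probability over the at least $|E_{i-1}|/2$ good edges, and absorbing the $1/\poly(n)$ failure terms into the constant, yields $\ExpCond{X_i}{E_{i-1}} \ge |E_{i-1}|/8$.
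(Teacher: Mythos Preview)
Your proposal misidentifies the quantity $X_i$. In the paper, $X_i$ is \emph{not} $|E_{i-1}|-|E_i|$; it is the number of edges of $E_{i-1}$ with at least one endpoint in $D_i$, the set of non-terminated nodes that are MIS-dominated \emph{just before} the shallow check of phase~$i$. The $1/8$ factor coming from the shallow check is applied only afterwards, in Lemma~\ref{lem:eliminate-edges}, to turn the bound on $X_i$ into the bound $\ExpCond{|E_i|}{E_{i-1}}\le \tfrac{63}{64}|E_{i-1}|$. Under your reading you would need to prove $\ExpCond{|E_{i-1}|-|E_i|}{E_{i-1}}\ge |E_{i-1}|/8$, which is strictly stronger than what the paper claims or needs, and your chain of constants (half the edges good, a Luby-type constant for domination, then $1/8$ for the shallow check) does not get anywhere near $1/8$.

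Independently of the definitional issue, Step~3 has a real gap. Suppose the local maximum $w\in N[u]$ you produce has $w\neq u$ and $w$'s deep check detects an existing $\inMIS{}$ neighbor $z$. Then $w$ becomes $\notInMIS{}$, but $z$ is a neighbor of $w$, not of $u$ or $v$; there is no ``symmetric shallow-check argument'' that forces either endpoint of $(u,v)$ to hear anything. The edge $(u,v)$ can simply survive. This is precisely the difficulty the paper's proof is designed to handle: it does \emph{not} use the classical good-edge decomposition, but instead splits each vertex's residual neighborhood into $N_D(v)$ (already MIS-dominated at the start of the phase) and $N_F(v)$ (free), lower-bounds $X_i(v)$ separately in the two regimes, and then uses a further case split on whether $|N_D(v)|\ge |N_i(v)|/2$ to cover all edges of $E_{i-1}$. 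You should target the correct $X_i$ and adopt this $N_D/N_F$ decomposition rather than the good-edge one.
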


After showing the above lemma, we can prove that every phase leads to, in expectation, a constant factor loss in the (edge) size of the residual graphs.

\begin{lemma} \label{lem:eliminate-edges}
For any Luby phase $1 \leq i \leq C \log n$,
$
\ExpCond{|E_{i}|}{E_{i-1}} \le \frac{63}{64}|E_{i-1}|.
$
\end{lemma}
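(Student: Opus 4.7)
The plan is to bootstrap from Lemma \ref{lem:eliminate-edges-ideal}, which already provides an idealized edge-elimination bound of $|E_{i-1}|/8$, and to pay only a constant factor when converting the idealized count $X_i$ into the actual edge reduction $|E_{i-1}| - |E_i|$. The needed factor is $1/8$, the constant lower bound on the shallow check's success probability obtained from Lemma \ref{lem:ETRBackoffCorrectness} with $k = 1$, namely $1 - (7/8)^1 = 1/8$. Since $8 \cdot 8 = 64$, this matches the target $63/64$ bound exactly.

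First, I would pin down $X_i$ as counting edges $(u,v) \in E_{i-1}$ that would be removed if every shallow check at the end of phase $i$ correctly reported the presence of a new MIS neighbor. A node leaves $V_i$ only by acquiring status \notInMIS{}, which can happen either through a deep check (correct with probability $1 - 1/\poly(n)$ by Lemma \ref{lem:ETRBackoffCorrectness} with $k = C' \log n$, and used by every node in $W_i \cup C_i$), or through a shallow check for the remaining ``lose'' nodes whose neighbor just joined the MIS. Thus the only real gap between $X_i$ and $|E_{i-1}| - |E_i|$ comes from the shallow-check failure probability.

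Next, I would condition on all randomness through the competition, the deep checks, and \peterAlg{} in phase $i$. Under this conditioning, the set of newly-MIS nodes is fixed, so the set of edges counted by $X_i$ becomes determined, while the shallow-check coins remain fresh and independent. By Lemma \ref{lem:ETRBackoffCorrectness}, each relevant listener detects its MIS-sending neighbor with probability at least $1/8$. Linearity of expectation combined with the tower rule then yields
\[
\ExpCond{|E_{i-1}| - |E_i|}{E_{i-1}} \;\ge\; \tfrac{1}{8}\, \ExpCond{X_i}{E_{i-1}} \;\ge\; \tfrac{|E_{i-1}|}{64},
\]
which rearranges into the claimed $\ExpCond{|E_i|}{E_{i-1}} \le (63/64)|E_{i-1}|$.

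The main obstacle is untangling the dependencies: $X_i$ is itself a random variable that depends on the entire prefix of phase $i$ up to the shallow checks, so one must first reveal that prefix in order to make the set of ``edges needing a shallow-check rescue'' deterministic before applying the per-edge $1/8$ bound. A minor secondary issue is absorbing the $1/\poly(n)$ failure probabilities of the auxiliary primitives (deep checks, \peterAlg{}, and the degree-estimate bound from Corollary \ref{cor:goodDegreeEstimate}) into the slack of the constant factor; this is straightforward provided $C'$ in the deep-check backoff is chosen large enough that these bad events contribute only an additive $o(1)$ to the conditional expectation.
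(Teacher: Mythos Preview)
Your proposal is correct and follows essentially the same route as the paper's proof: invoke Lemma~\ref{lem:eliminate-edges-ideal} to get $\ExpCond{X_i}{E_{i-1}} \ge |E_{i-1}|/8$, then use the $1/8$ success guarantee of the shallow check (Lemma~\ref{lem:ETRBackoffCorrectness} with $k=1$) to conclude that each edge counted by $X_i$ is removed with probability at least $1/8$, yielding the $63/64$ bound. The paper's version is terser---it does not spell out the conditioning step or the absorption of $1/\poly(n)$ failure events---but the underlying argument is the same.
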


Finally, we prove our main result.

\MISnoCD*

\begin{proof}
First, we show correctness. By Lemma~\ref{lem:maintainsMIS}, with probability at least $1-O(\log n)/n^2 \geq 1-\sfrac{1}{4n}$, the set of nodes with status \inMIS{} is independent throughout the execution of Algorithm~\ref{alg:no-CD-MIS}. Hence, it suffices to show that all nodes become decided within $C \log n$ Luby phases. 
By induction and Lemma~\ref{lem:eliminate-edges}, for every $i \ge 1$, we have 
$\Exp(|E_i|) \le \left(\frac{63}{64}\right)^i |E_0|$. 
Hence, by Markov's inequality,
$\Prob{|E_i| \ge 1} \le \Exp(|E_i|)  \le \left(\frac{63}{64}\right)^i |E_0|$. By choosing $C \geq 4/\log(64/63)$, we get that $\Prob{E_{C \log n} = \emptyset} \ge 1 - \sfrac{1}{4n}$. Finally, we consider any Luby phase $i$ with $E_i = \emptyset$. In that phase's competition, any undecided node chooses a local maximum bitstring and enters $W_i$ with probability at least $1 - \sfrac{1}{4n}$ by Lemma \ref{lem:Wi-notEmpty}, in which case it becomes decided by the end of the phase with probability 1, by Lemma \ref{lem:WiCi-MustDecide}. (A final $\sfrac{1}{4n}$ probability term comes from thresholding the energy complexity, as explained at the end of the proof.) In summary, all nodes become decided, and the output is an MIS with probability at least $1-\sfrac{1}{n}$.

Second, we bound the round complexity. From the algorithm description, each Luby phase takes $T_L = O(\log^2(n) \log \Delta)$ rounds. 
Hence, the round complexity of the algorithm (which runs for $C \log n$ Luby phases) is $O(\log^3 n \log \Delta)$ rounds.  

Finally, we upper bound the energy complexity. First, if any node $v$ starts the Luby phase as an MIS node then $v$ spends $O(\log n)$ energy: $v$ sleeps during the competition, and its participation in sender backoffs during the checking adds up to $O(\log n)$ energy by Lemma \ref{lem:ETRBackoffUpperBounds}. 

Second, if node $v$ starts \undecided{} and enters $C_i \cup W_i$, then during that Luby phase, $v$ spends $O(\log^2(n) \log \log n)$ energy. Indeed, during the competition, node $v$ uses $O(\log^2 n)$ energy for the sender backoffs overall (by Lemma \ref{lem:ETRBackoffUpperBounds}), and $O(\log^2 n) + O(\beta \log n \cdot \log(n)\log\log n)$ energy for the receiver backoffs overall (by Lemma \ref{lem:ETRBackoffUpperBounds} and due to the change in the degree estimate). As for the checking, node $v$ uses up $O(\log^2(n) \log \log n)$ energy during \peterAlg{} due to 
Davies' algorithm (cf. Section \ref{subsec:DaviesAlg}) 
and
because the subgraph induced by $C_i$ has maximum degree $O(\log n)$ (see Corollary \ref{cor:goodDegreeEstimate}).

By Lemma \ref{lem:WiCi-MustDecide}, there can be a single Luby phase in which $v$ starts \undecided{} and enters $C_i \cup W_i$, as subsequently node $v$ either sets its status to \inMIS{}, or sets its status to \notInMIS{} and sleeps for the remainder of the MIS algorithm. Hence, summing over all Luby phases in which $v$ either starts as an MIS node, or attempts to join the MIS (i.e., joins $C_i \cup W_i)$, the energy spent by $v$ is upper bounded by $O(\log^2(n) \log \log n)$.

It remains to bound the energy complexity spent over any Luby phases in which $v$ starts undecided and loses the competition. Let $A_i$ be the energy spent by $v$ in Luby phase $i$ times the indicator random variable that $v$ loses Luby phase $i$. We can upper bound the energy complexity $A = \sum_{i=1}^{\beta \log n} A_i$ by $O(\log^2 n)$ with high probability. Note that the energy spent in Luby phases in which $v$ loses is either spent sending during the leading 1 bits or listening during the first 0 bit. Let $X_i$ be the number of leading 1 bits in $v$'s bitstring for Luby phase $i$. The random variables $(X_i + 1)$ are independent geometric random variables with parameter $\frac{1}{2}$. In the Bitty phase corresponding to $v$'s first 0 bit in Luby phase $i$, let $N_i$ be the number of $v$'s neighbors that are sending because they are still active and have a 1 bit and let $B_i$ be the number of backoffs $v$ must participate in until hearing a message. Let $I_i$ be the indicator random variable that $N_i > 0$ and let $Y_i = B_i I_i  + (1 - I_i)$. Conditioned on $N_i$ and any random variables $Y_j$ for $j < i$, $Y_i$ is a geometric random variable with parameter $p(N_i)$, where $p(0) = 1$ and $p(n) \ge \frac{1}{8}$ for $n \ge 1$, which follows from well-known statements on exponential backoff (see Lemma~\ref{lem:ETRBackoffCorrectness} in Section \ref{subsec:backoff}).
Then, $A_i \le O(\log n) X_i + O(\log \Delta) Y_i$.

For the $X_i$ random variables, we instead bound the sum of $(X_i+1)$ random variables, since they are more nicely defined. Let $X = \sum_{i=1}^{\beta \log n} (X_i + 1)$. Then, $\mu_X = \Exp[X] = 2\beta \log n$. By Theorem~\ref{Chernoff Bound}, we have $\Prob{X \ge 2 \beta \lambda \log n} \le e^{-\beta \log n (\lambda - 1 - \ln \lambda)} \le n^{-C_X}$ for arbitrary $C_X$ and sufficiently large $\lambda$. Next, we want to bound the sum of the $Y_i$ random variables. However, these $Y_i$ random variables are not independent since the number of backoffs required to hear a message depends on the number of neighbors sending, and the number of neighbors sending in any Luby phase affects how many neighbors send in the next Luby phase.
Hence, we instead show by induction that $Y = \sum_{i=1}^{\beta \log n} Y_i$ is stochastically dominated by $Z = \sum_{i=1}^{\beta \log n} Z_i$, where the random variables $Z_i$ are independent geometric random variables with parameter $\frac{1}{8}$. For the base case, it can be easily seen that $Y_1 \preceq Z_1$. Suppose $\sum_{i=1}^{k-1} Y_i \preceq \sum_{i=1}^{k-1} Z_i$. Then,
 \allowdisplaybreaks
 \begin{align*}
    \Prob{\sum_{i=1}^k Y_i \ge y} & = \sum_{y', n} \Prob{\sum_{i=1}^k Y_i \ge y \, \middle\vert \, N_i = n, \sum_{i=1}^{k-1} Y_i = y'} \Prob{N_i = n, \sum_{i=1}^{k-1} Y_i = y'} \\
    & = \sum_{y', n} \Prob{Y_k \ge y - y' \, \middle\vert \, N_i = n, \sum_{i=1}^{k-1} Y_i = y'} \Prob{N_i = n, \sum_{i=1}^{k-1} Y_i = y'} \\
    & = \sum_{y', n} \min \left\{ 1, {(1 - p(n))}^{y - y' - 1} \right\} \Prob{N_i = n, \sum_{i=1}^{k-1} Y_i = y'} \\
    & \le \sum_{y', n} \min \left\{ 1, {\left( \frac{7}{8} \right)}^{y - y' - 1} \right\} \Prob{N_i = n, \sum_{i=1}^{k-1} Y_i = y'} \\
    & = \Prob{Z_k + \sum_{i=1}^{k-1} Y_i \ge y} 
 \le \Prob{\sum_{i=1}^k Z_i \ge y},
 \end{align*}
where the first inequality comes from $p(n) \ge \frac{1}{8}$ and the second comes from Lemma~\ref{lem:stochasticDominationSum}. Applying Theorem~\ref{Chernoff Bound} to $Z$ with $\mu_Z = \Exp[Z] = 8 \beta \log n$, we get $\Prob{Z \ge 8 \beta \lambda \log n} \le e^{-\beta \log n (\lambda - 1 - \ln \lambda)} \le n^{-C_Z}$ for arbitrary $C_Z$ and sufficiently large $\lambda$. Following which, the stochastic domination implies the same bound holds for $\Prob{Y \ge 8 \beta \lambda \log n}$. 

Finally, a union bound (over the nodes) shows that $A = O(\log^2 n)$ with high probability, say $1-\sfrac{1}{4n}$. Adding up the energy complexities, we get that all nodes spend $O(\log^2(n) \log \log n)$ energy with probability at least $1-\sfrac{1}{4n}$ during this MIS algorithm. To obtain the claimed deterministic upper bound on the energy complexity, we can do the following: any node spending more than some $\Theta(\log^2(n) \log \log n)$ energy threshold simply sleeps for the remainder of the algorithm and arbitrarily decides whether to join the MIS.
\end{proof}

\section{Conclusion and Open Questions}
\label{sec:conc}

We presented new, more energy-efficient MIS algorithms for radio networks.
While our CD algorithm is energy-optimal, it is unclear whether our no-CD algorithm is. 
A key open problem is whether we can improve the energy complexity significantly in the no-CD model or whether our bound of $O(\log^2 n\log \log n)$ is nearly optimal (up to a $O(\log \log n)$ factor). 

In the CD model, it is known that one can design an $O(\log n)$ round algorithm for growth-bounded graphs \cite{wattenhofer-growthbounded}. 
It is not clear if this bound can be achieved for general graphs as well. The round complexity of our CD algorithm is $O(\log^2 n)$, which is off by an $O(\log n)$ factor of the $\Omega(\log n)$ round complexity lower bound in the CD model \cite{wattenhofer-growthbounded}. Can we improve the round complexity in the CD model while getting the optimal $O(\log n)$ energy bound?

In the no-CD model, our energy complexity bound of $O(\log^2 n \log \log n)$ almost matches the round complexity lower bound of $\Omega(\log^2 n)$. Can we design a better energy-efficient algorithm that takes $o(\log^2 n)$ energy in arbitrary graphs? 
Furthermore, can we improve the round complexity of our no-CD algorithm while maintaining its energy complexity?




\appendix

\section{Some Useful Facts from Probability Theory}

We will need the following concepts and results from Probability Theory.
\begin{definition}
    \textbf{Stochastic Domination.} For any two random variables $X, Y$, we say that $X$ stochastically dominates $Y$, denoted by $X \succeq Y$, if for all $z \in \mathbb{R}$, $\Pr[X > z] \geq \Pr[Y > z]$.
\end{definition}

In general it is not true that stochastic domination carries over to sums of random variables. However, the following lemma shows that it is true in a limited setting. We leave its proof as an exercise to the reader.

\begin{lemma}
\label{lem:stochasticDominationSum}
Let $X$ and $Y$ be random variables such that $X \succeq Y$ and let $Z$ be a random variable that is independent of $X$ and $Y$. Then, $X + Z \succeq Y + Z$.
\end{lemma}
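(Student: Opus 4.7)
The plan is to prove $X+Z \succeq Y+Z$ by conditioning on $Z$ and then reducing to the hypothesis $X \succeq Y$ applied pointwise at a shifted threshold. Fix an arbitrary $z \in \mathbb{R}$; it suffices to show $\Pr[X+Z > z] \ge \Pr[Y+Z > z]$.

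First, I would write each side as an expectation over $Z$. By the law of total probability and the assumed independence of $Z$ from $X$ (respectively from $Y$), we have
$$\Pr[X+Z > z] \;=\; \int \Pr[X+Z > z \mid Z = w] \, dF_Z(w) \;=\; \int \Pr[X > z - w] \, dF_Z(w),$$
and similarly
$$\Pr[Y+Z > z] \;=\; \int \Pr[Y > z - w] \, dF_Z(w).$$
Now apply the hypothesis $X \succeq Y$ at the threshold $z - w$ for each fixed $w$: this gives $\Pr[X > z - w] \ge \Pr[Y > z - w]$. Integrating this pointwise inequality against the probability measure of $Z$ preserves the inequality, yielding $\Pr[X+Z > z] \ge \Pr[Y+Z > z]$. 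Since $z$ was arbitrary, we conclude $X + Z \succeq Y + Z$.

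The argument has essentially no obstacle; the only thing to watch is using independence at the correct moment, i.e., to collapse $\Pr[X+Z > z \mid Z=w]$ into $\Pr[X > z-w]$ (and analogously for $Y$). Note that stochastic domination is a statement about marginals, so only the marginal independences $Z \perp X$ and $Z \perp Y$ are needed, which the hypothesis provides. If one prefers to avoid the integral manipulation, an equivalent and even shorter route is via the standard coupling characterization of stochastic domination: realize $X$ and $Y$ on a common probability space as $\widetilde X, \widetilde Y$ with $\widetilde X \ge \widetilde Y$ almost surely (for example, via the quantile transform of a single uniform $U$), take an independent copy $\widetilde Z$ of $Z$, and observe that $\widetilde X + \widetilde Z \ge \widetilde Y + \widetilde Z$ almost surely. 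Since $(\widetilde X, \widetilde Z)$ and $(\widetilde Y, \widetilde Z)$ have the same joint distributions as $(X,Z)$ and $(Y,Z)$ by independence, almost sure domination implies stochastic domination of the sums, completing the proof.
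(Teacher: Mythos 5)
Your proof is correct. Note that the paper does not actually supply a proof of this lemma --- it explicitly leaves it as an exercise to the reader --- so there is nothing to compare against; your conditioning argument (writing $\Pr[X+Z>z]=\int \Pr[X>z-w]\,dF_Z(w)$ and applying the hypothesis pointwise) is the standard and expected route, and your remark that only the marginal independences $Z\perp X$ and $Z\perp Y$ are needed, together with the alternative coupling argument, are both sound.
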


We will also need the following concentration inequality for the sum of independent geometric random variables. 

\begin{theorem} [Theorem 2.1 from~\cite{janson}]\label{Chernoff Bound}
    Let $X_1, X_2, \ldots, X_n$ be independent geometric random variables with parameters $p _1, p_2, \ldots, p_n$ respectively. Let $X = \sum_{i=1}^n X_i$, $\mu = E[X]$, and  $p_* = \min\limits_{i} p_i$. Then, for any $\lambda \geq 1$ the following holds: 
    $$
        \Pr\left( X \geq \lambda \mu \right)  \leq e^{-p_* \mu (\lambda - 1  - \ln \lambda)}
    $$
\end{theorem}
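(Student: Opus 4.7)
The plan is to follow the standard Chernoff method, adapted to handle heterogeneous geometric parameters, which is essentially Janson's argument from~\cite{janson}. The target bound has the classical Chernoff shape $\exp(-\mu(\lambda-1-\ln\lambda))$ but scaled by $p_*$, so the proof should begin as a textbook MGF computation and then carefully reduce the $n$ different parameters to a single effective parameter $p_*$.

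First, I would apply the Markov-based Chernoff step to $e^{tX}$: for any $t>0$ small enough that all MGFs are finite (concretely $t<\ln(1/(1-p_*))$),
$$\Pr(X\geq\lambda\mu)\;\leq\;e^{-t\lambda\mu}\,\Exp[e^{tX}]\;=\;e^{-t\lambda\mu}\prod_{i=1}^{n}M_i(t),$$
where by independence and the known MGF of a geometric random variable supported on $\{1,2,\dots\}$,
$$M_i(t)\;=\;\frac{p_i e^t}{1-(1-p_i)e^t}.$$

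Second — the core technical step — I would bound each factor uniformly in $p_i$. The key lemma is that, for fixed $t$ in the valid range, the quantity $p\mapsto p\,\ln M_{\mathrm{Geom}(p)}(t)$ is non-increasing on $[p_*,1]$. This implies $\ln M_i(t)=\mu_i\cdot p_i\ln M_i(t)\leq \mu_i\cdot p_*\ln M_{\mathrm{Geom}(p_*)}(t)$, so summing gives $\sum_i\ln M_i(t)\leq \mu\, p_*\ln M_{\mathrm{Geom}(p_*)}(t)$. The bound collapses to a function of only $p_*$ and $\mu$.

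Third, I would optimize $t$. Setting $s=e^t$, the log-bound has the form $\mu(p_*\ln(p_*s/(1-(1-p_*)s))-\lambda\ln s)$. Differentiating in $s$ and solving yields the minimizer $s^\ast=(\lambda-p_*)/(\lambda(1-p_*))$. Substituting back and simplifying (using $\ln(1-p_*)\leq -p_*$ at the end to clean up the $(1-p_*)$ term into the stated exponential form) collapses the exponent to $-p_*\mu(\lambda-1-\ln\lambda)$, which is exactly the claim.

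The main obstacle will be the second step: establishing the monotonicity of $p\ln M_{\mathrm{Geom}(p)}(t)$ in $p$. Writing out $\partial_p\bigl(p\,[t+\ln p-\ln(1-(1-p)e^t)]\bigr)$ produces an expression that requires a short but delicate sign analysis, and one must handle the boundary behavior as $t\to\ln(1/(1-p))$ carefully to ensure the bound is valid across the full allowed range of $t$. Once monotonicity is in hand, the remaining steps are mechanical calculus; the only remaining care is tracking constants so that the simplification at the end yields precisely the exponent $p_*(\lambda-1-\ln\lambda)$ rather than a merely asymptotically equivalent form.
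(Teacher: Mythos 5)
First, note that the paper does not prove this statement at all: it is imported verbatim as Theorem~2.1 of Janson~\cite{janson}, so the only meaningful comparison is with Janson's published argument. Your proposal is a correct Chernoff-method derivation and does yield the stated bound, but it routes through the middle of the argument differently. Janson avoids your ``delicate'' monotonicity lemma entirely: he dominates each exact geometric moment generating function by an exponential-type one, $\frac{p_i e^t}{1-(1-p_i)e^t} \le (1-t/p_i)^{-1}$ for $0\le t<p_i$ (which is equivalent to the elementary inequality $e^t(1-t)\le 1$), then uses the monotonicity of $x\mapsto -\ln(1-x)/x$ to reduce all parameters to $p_*$, and finally just plugs in the explicit choice $t=p_*(1-1/\lambda)$ rather than optimizing. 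Your step two --- that $p\mapsto p\ln M_{\mathrm{Geom}(p)}(t)$ is non-increasing on the admissible range --- is in fact true: writing $q=1-(1-p)e^t$, the derivative equals $t+\ln(p/q)-(e^t-1)/q$, and combining $\ln(p/q)\le (p-q)/q$ with $p-q=(1-p)(e^t-1)$ and $q\le p$ shows it is nonpositive. Your optimizer $s^*=(\lambda-p_*)/(\lambda(1-p_*))$ is also correct and lies in the valid range, and the final exponent reduces to $-p_*\mu(\lambda-1-\ln\lambda)$ via $-\ln(1-x)\ge x$ applied with $x=p_*(\lambda-1)/(\lambda-p_*)$ (this is the precise log inequality you need at the end, rather than $\ln(1-p_*)\le -p_*$). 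So your proof goes through; Janson's is shorter because the exponential-domination step replaces both your monotonicity lemma and the calculus optimization, but your version is a legitimate alternative and arguably makes the dependence on $p_*$ more transparent.
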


\section{Proofs}
\label{app:proofs}



    


\begin{proof} [Proof of Lemma~\ref{lem:edgeProgressCD}]
First, observe that, in each Luby phase, Algorithm~\ref{alg:MIS} always adds all local maxima of the random function $x$ to the output.  This is because, for a local maximum $v$, every neighbor has a zero in the first Bitty phase when it disagrees with $v$.  Consequently, if you started both Algorithm~\ref{alg:MIS} and Luby's algorithm from the same residual graph
$G_{i-1}$, Algorithm~\ref{alg:MIS} would remove a superset of the vertices removed by Luby's algorithm. Since we are looking at vertex-induced subgraphs, this implies we also remove a superset of the edges removed by Luby's algorithm.  Since Luby's algorithm already satisfies the conclusion of the Lemma, so does Algorithm~\ref{alg:MIS}.
\end{proof}





\begin{proof} [Proof of Theorem~\ref{thm:MISCD}] 
The correctness and round complexity claims are straightforward. For the correctness claim, by Lemma~\ref{lem:CD-output-independent}, with high probability, the output is an independent set. When it is, Lemma~\ref{lem:CD-output-maximal} tells us that, with high probability, it is also maximal. Thus, a union bound over the failure probabilities leads to the claimed correctness guarantee. As for the round complexity, Algorithm~\ref{alg:MIS} consists of two nested for-loops, each with $O(\log n)$ iterations, which implies the claimed $O(\log^2 n)$ bound.

Finally, we show the energy complexity upper bound. More concretely, we show that, with probability $1 - 1/\poly(n)$, no node spends more than $O(\log n)$ energy. Since our algorithm already has a $1/\poly(n)$ chance of failure, we can think of exceeding the energy bound as another way to fail, in which case our bound on the maximum energy spent is absolute.  

Let us examine the energy spent by a particular node, $v$.  We split our analysis into two parts: early rounds, in which $v$'s decision is still in doubt, and late rounds, in which $v$'s decision is determined, although $v$ may not know it yet.  Specifically, a \emph{late round} is a Bitty phase for which $v$ is still active, but all neighbors of $v$ have already dropped out.  In this case, $v$ will inevitably complete the inner for loop, and set $status = \inMIS{}$.  An \emph{early round} is a Bitty phase in which $v$ is active, and at least one neighbor of $v$ is also active.
Note that every round in which $v$ spends energy within a Luby phase is either early or late, but not both.

Since all the late rounds must occur within a single Luby phase, these must contribute at most $\beta \log n$ to $v$'s energy expenditure.

We now examine the early rounds one by one, always conditioning on the outcomes of all previous rounds.
Say that an early round is \emph{fruitful} if $v$'s bit of $x$ for that round is a $0$, and at least one active neighbor's bit for that round is a $1$.  Since there is at least one neighbor, and the coin flips are independent, every early round has probability at least $1/4$ to be fruitful, regardless of the prior history.
Consider the first $8 C \log n$ early rounds. In expectation, at least $2 C \log n$ of them are fruitful. Applying Chernoff's bound for the lower tail, with $\delta = 1/2$, we have 
$$
    \Pr(\mbox{number of fruitful rounds} < C\log n) 
    < \exp\left(- \frac{(1/2)^2}{2} \cdot 2 C \log n\right) < \frac{1}{n^2},
$$
where the last inequality holds for large enough constant $C$.

Hence, with probability at least $1 - 1/n^2$, in the first $8 C \log n$ early rounds there are at least $C \log n$ fruitful rounds. But, since each fruitful round causes $v$ to drop out of the current Luby phase, and there are at most $C \log n$ Luby phases, it follows that $v$ is active for at most $8 C \log n$ early rounds. Adding these to the at most $\beta \log n$ late rounds (discussed earlier), and at most $C \log n$ rounds in between Luby phases, corresponding to lines 13 or 17 in the pseudocode, our final upper bound on energy spent is $(9 C + \beta) \log n = O(\log n)$.

A union bound over all vertices $v$ shows that, with  probability at least $1 - 1/n$, no vertex spends more than $O(\log n)$ energy, completing the proof.
\end{proof}

\begin{proof} [Proof of Lemma~\ref{lem:maintainsMIS}]
We prove by induction on $i \in [1, C \log n + 1]$ the following claim: at the start of any Luby phase $i$, the nodes with status \inMIS{} form an independent set with probability at least $1-9(i-1)/n^2$. (For $i = C \log n + 1$, you can consider instead the end of Luby phase $C \log n$.) 

For the base case of $i = 1$, the statement holds trivially and with probability 1 since no node has status \inMIS{}. Now, we consider some Luby phase $i \ge 1$, such that the nodes with status \inMIS{} at its start, denoted by $M_i$, is independent (with probability at least $1-9(i-1)/n^2$). Note that in Luby phase $i$, only nodes having joined $W_i \cup C_i$ during the competition may change their status to \inMIS{}. Out of these, nodes in $W_i$ form an independent set with probability at least $1-6/n^2$ (by Lemma \ref{lem:Wi-Independent} and a union bound over all nodes pairs). Moreover, by Lemma \ref{lem:ETRBackoffCorrectness} (and our choice of $C'$), any node in $W_i$ detects if it has a neighbor in $M_i$ with probability at least $1-1/n^5$, in which case it does not set its status to \inMIS{}. By a simple union bound, with probability at least $1-1/n^{4}$, all nodes in $W_i$ detects whether it has a neighbor in $M_i$. Then, if we let $W_i^*$ denote those nodes of $W_i$ that joined the MIS, then $M_i \cup W_i^*$ is independent with probability at least $1-6/n^2-1/n^4$.

Next, we consider the nodes of $C_i$ that change their status to \inMIS{}. Note that any node in $C_i$ detects if it has a neighbor in $M_i \cup W_i^*$ with probability at least $1-1/n^5$, in which case it does not set its status to \inMIS{}. Once again, we can apply a union bound, but now to the nodes of $C_i$. It remains to consider nodes in $C_i^*$ --- defined as the nodes of $C_i$ that did not detect any MIS neighbors --- because $C_i^*$ may contain some adjacent nodes. However, nodes in $C_i^*$ run \peterAlg{} on an induced subgraph of maximum degree $O(\log n)$ (by Corollary \ref{cor:goodDegreeEstimate}) and hence by applying Davies' algorithm (cf. Section \ref{subsec:DaviesAlg}),  nodes in $C_i^*$ that have status \inMIS{} (denoted here by $M_i'$) form an independent set with probability at least $1-1/n^2$.
Therefore, it follows that the set of all nodes that have status \inMIS{} by the end of Luby phase $i$ (and thus at the start of phase $i+1$), which is $M_i \cup W_i^* \cup M_i'$, is independent with probability at least $1 -9(i-1)/n^2 -  (6/n^2 + 2/n^{4}+1/n^2) \geq 1 -  i \cdot (9/n^2)$.
This completes the induction step, and the lemma statement follows when we consider $i \le C \log n + 1$.
\end{proof}

To prove the following two lemmas, we will need the following definitions. For any Luby phase $i$, let $D_i^{start}$ denote all non-terminated, MIS-dominated nodes at the start of the phase, and $D_i$ denote all such nodes immediately prior to the shallow check of that phase (i.e., in Line~\ref{line:shallowCheck} of Algorithm~\ref{alg:no-CD-MIS}).
Moreover, let $X_i$ denote the number of edges in the residual graph (i.e., in $E_i$) incident to $D_i$, and for any $v \in D_i$, let $X_i(v)$ denote the number of such edges incident to $v$. 

\begin{proof} [Proof of Lemma~\ref{lem:eliminate-edges-ideal}]
Let us denote by $N_i(v) = N(v) \cap V_{i-1}$ the neighbors of $v$ within the residual graph, and within those, by $N_D(v) = N_i(v) \cap D_i^{start}$ those that start the Luby phase as MIS-dominated, and by $N_F(v) = N_i(v) \setminus N_D(v)$ those that do not. 

Now, consider some arbitrary node $v$. First, note that any neighboring node $u \in N_F(v)$ that enters the MIS implies that $v$ is in $D_i$ (immediately prior to the shallow check) and thus that $X_i(v) \geq |N_i(v)|$. Second, if no neighbors of $v$ (nor $v$ itself) enters the MIS, then it still holds that $X_i(v) \geq |N_D(v)|$ since $D_i^{start} \subseteq D_i$ (because nodes in the MIS never change their status). Hence, we have 
$$ \ExpCond{X_i(v)}{E_{i-1}} \geq |N_D(v)| + \Prob{v \in D_i \mid E_{i-1}} |N_F(v)| $$

Next, we lower bound the probability that $v \in D_i$, conditioned on $E_{i-1}$. 
We say that some neighbor $u \in N_F(v)$ is eligible with respect to $v$, and denote as $E_u$ the corresponding event, if $u \in N_F(v)$ chooses a maximum bitstring over $N_i(v) \cup N_i(u)$ in the competition of this Luby phase. (For the sake of this analysis, we assume nodes with status \inMIS{} also choose a bitstring---contrary to the algorithm description---but that each such bitstring is smaller than any bitstring chosen by any node without an \inMIS{} status.) Note that $E_u$ implies that $u$ chose a locally maximum bitstring, thus by Lemma \ref{lem:Wi-notEmpty}, $u$ sets its status to \win{} after the competition of this Luby phase with probability at least $1 - 1/n^2$. Moreover, by definition of $D_i^{start}$, $u$ has no neighbors with status \inMIS{} during the competition and the first deep check, so by the algorithm description, 
$u$ sets its status to \inMIS{} in this Luby phase, prior to Line \ref{line:shallowCheck}. In summary, $E_u$ implies that $v$ is in $D_i$ with probability $1 - 1/n^2$. Moreover, note that the $E_u$ events are mutually exclusive over $N_F(v)$, and $\Pr(E_u \mid E_{i-1}) \geq 1/(|N_i(v)|+|N_i(u)|)$. Hence, we have that 

$$ \Pr(v \in D_i \mid E_{i-1}) \geq \sum_{u \in N_F(v)} \Prob{E_u \mid E_{i-1}} \left(1 - \frac{1}{n^2}\right) \geq \left(1 - \frac{1}{n^2}\right) \sum_{u \in N_F(v)} \frac{1}{|N_i(v)|+|N_i(u)|} $$

Finally, we shall lower bound the expectation of $X_i$ conditioned on $E_{i-1}$. By linearity of conditional expectation, and because we count each edge of $X_i$ twice when summing $X_i(v)$ over all nodes $v \in V_{i-1}$, we have that

\begin{align*}
\ExpCond{X_i}{E_{i-1}} 
&= \frac{1}{2} \sum_{v \in V_{i-1}} \ExpCond{X_i(v)}{E_{i-1}} \\
&\geq \frac{1}{2} \sum_{v \in V_{i-1}} |N_D(v)| + \frac{1}{2} \left(1 - \frac{1}{n^2}\right) \sum_{v \in V_{i-1}} \sum_{u \in N_F(v)} \frac{|N_F(v)|}{(|N_i(v)|+|N_i(u)|} 
\end{align*}

Let us denote by $E_{i-1}^D$ all edges of $E_{i-1}$ with at least one endpoint in $D_i$. Then,  
$$\frac{1}{2} \sum_{v \in V_{i-1}} |N_D(v)| \geq |E_{i-1}^D|/4 + \sum_{v \in V_{i-1}} |N_D(v)| /4$$

Next, let us denote by $V_{i-1}^+$ all nodes in $V_{i-1}$ for which $|N_D(v)| \geq |N_i(v)|/2$, and by $E_{i-1}^+$ all edges of $E_{i-1}$ with at least one endpoint in $V_{i-1}^+$. 
Then, 

$$\sum_{v \in V_{i-1}} |N_D(v)| /4 \geq  \sum_{v \in V_{i-1}^+} |N_i(v)| /8 \geq |E_{i-1}^+| / 8$$

Finally, if we define $E_{i-1}^R = E_{i-1} \setminus (E_{i-1}^+ \cup E_{i-1}^D)$, then we can reorder the double sum and ignore some terms to get, for the last term, that

\begin{align*}
\frac{1}{2} \left(1 - \frac{1}{n^2}\right) \sum_{v \in V_{i-1}} \sum_{u \in N_F(v)} \frac{|N_F(v)|}{(|N_i(v)|+|N_i(u)|} 
&\geq \frac{1}{2} \left(1 - \frac{1}{n^2}\right) \sum_{\{u,v\} \in E_{i-1}^R} \frac{|N_F(u)| + |N_F(v)|}{|N_i(v)|+|N_i(u)|} \\
&\geq \frac{1}{4} \left(1 - \frac{1}{n^2}\right) |E_{i-1}^R| \geq |E_{i-1}^R|/5
\end{align*}

where the last inequality holds for large enough $n$.
It follows that  $$\ExpCond{X_i}{E_{i-1}} \geq |E_{i-1}| /8.  \qedhere
$$
\end{proof}

\begin{proof} [Proof of Lemma~\ref{lem:eliminate-edges}]
Consider any Luby phase $i$. By Lemma~\ref{lem:ETRBackoffCorrectness}, every node in $D_i$ detects the presence of a neighbor with status \inMIS{} with probability at least $1/8$. If that happens, then that node sets its status to \notInMIS{} (and terminates) by the end of the Luby phase and thus every of its incident edge leaves the residual graph. In other words, every edge incident on $D_i$ leaves the residual graph (i.e., is not in $E_i$) with probability at least $1/8$. There are $X_i$ such edges, and by Lemma~\ref{lem:eliminate-edges-ideal}, $\ExpCond{X_i}{E_{i-1}} \ge \frac{|E_{i-1}|}{8}$.
Hence, it follows that $\ExpCond{|E_i|}{E_{i-1}} \leq  (1-\frac{1}{64}) |E_{i-1}|$.
\end{proof}

\section{Energy-Efficient Backoff Procedures}
\label{app:backoff}

Designing algorithms in the no-CD model can be significantly more difficult than in the CD model. In particular, a crucial difference is that in the no-CD model, nodes can no longer distinguish silence from collisions. 
Hence, nodes work with less information than they otherwise would have. In particular, the only way for a node to determine whether one of its neighbors is sending is for \emph{exactly one} of its neighbors to send while it listens.

A generic way to achieve this is via exponential backoff. (This is a well-known procedure, and is also referred to as \Call{Decay}{} in some works.) 
At a high-level, nodes decide to take either a \emph{sender} or \emph{receiver} role for the entire backoff protocol, and the protocol ensures that any receiver that has at least one sender neighbor hears a message with constant probability~\cite{BGI92}.  
More concretely, traditional backoff works as follows. In a first round, all sender nodes send a message, while receiving nodes simply listen. Then, each sender node flips a fair coin to decide whether to send again in the next round (say when flipping 1) or drop out of the backoff (when flipping 0). This repeats for $O(\log \Delta)$ rounds.
These iterations of $O(\log \Delta)$ rounds can themselves be repeated, say up to $k$ times, to boost the success probability --- this follows from well-known statements on exponential backoff, but see also Lemma \ref{lem:ETRBackoffCorrectness}. And if we take $k$ large enough, say $\Theta(\log n)$, then the success is guaranteed with high probability.

In the above (traditional) exponential backoff, all nodes must be awake in all $O(\log n \log \Delta)$ rounds. In contrast, we give energy-efficient adaptations of the traditional exponential backoff procedures. The sender-side backoff is modified so that senders transmit only once per iteration, leading to a guaranteed and significant energy efficiency for senders. The receiver-side backoff is modified so that once a node hears a message, it sleeps for the remainder of the backoff (i.e., essentially an energy-motivated early ``termination''). Note that any receiver node with no sender neighbor will be awake throughout the entire backoff, while any receiver node with at least one sender neighbor will save a significant amount of energy in expectation. More concretely, in the latter case, such a receiver node will be awake in expectation for only a constant number of iterations before it hears a message and sleeps.


\begin{algorithm}[ht]
\caption{Energy-efficient $k$-Repeated Backoff Procedures}\label{alg:}
\begin{algorithmic}[1]
\Procedure{\SndBackoff{}}{$k$, $\Delta$}
\LComment{Senders send once per iteration, and any listener hears a sender neighbor (if one exists) with constant probability per iteration.}
\For{$i \gets 1$ to $k$}
    \State $x \gets \text{Sample from a geometric distribution with parameter } \frac{1}{2}$
    \State $x \gets \min(x, \lceil\log \Delta\rceil)$
    \For{$j \gets 1$ to $\lceil\log \Delta\rceil$}
        \If{$j = x$} 
            \State \transmit{} 1
        \Else
            \State \sleep{}
        \EndIf
    \EndFor
\EndFor
\EndProcedure

\\

\Procedure{\RecBackoff{}}{$k$, $\Delta$, $\DeltaEst = \Delta$}
\LComment{The third argument is optional: when not specified, it defaults to $\Delta$.}
\State $heard \gets$ False
\For{$i \gets 1$ to $k$}
    \LComment{While they have not yet heard a message, receivers listen for log of their approximate degree rounds per iteration.}
    \For{$j \gets 1$ to $\lceil\log \Delta\rceil$}
        \If{not $heard$ and $j \leq \lceil\log  \DeltaEst \rceil$} 
            \State \listen{}
            \If{heard 1}
                \State $heard \gets$ True
            \EndIf
        \Else
            \State \sleep{}
        \EndIf
    \EndFor
\EndFor
\State \Return $heard$
\EndProcedure
\end{algorithmic}
\end{algorithm}

Note that senders and receivers have asymmetric energy complexities (captured in the below lemma), that is, senders use a logarithmic factor less energy than receivers. This asymmetry is a crucial factor in the low energy complexity of our MIS algorithm in Section \ref{sec:noCDMIS}. These energy-efficient backoffs also provide the same correctness guarantees as the non-energy-efficient ones.

\begin{proof} [Proof of Lemma~\ref{lem:ETRBackoffUpperBounds}]
    The round complexity upper bounds follow from the fact that both backoff procedures execute $k$ iterations, each taking $O(\log \Delta)$ rounds. As for the awake complexity upper bounds, they follow from the fact that nodes transmit once per outer loop iteration in $\Call{\SndBackoff{}}{}$ and listen at most $O(\log \DeltaEst)$ times per outer loop iteration in $\Call{\RecBackoff{}}{}$.
\end{proof}

\begin{proof} [Proof of Lemma~\ref{lem:ETRBackoffCorrectness}]
    Consider a receiver node $v$, and at most $\DeltaEst$ sender nodes neighboring $v$. First, note that these sender nodes participate in all backoff iterations, thus any receiver node has the same number of sender neighbors throughout all backoff iterations. 
    
    Moreover, the following claim holds for each backoff iteration:
    if (receiver) node $v$ has at least one sender neighbor, then during that iteration $v$ hears a message with probability at least 1/8. Indeed, for any single backoff iteration, each sender chooses to transmit in round $j < \lceil \log \Delta \rceil$ with probability $1/2^j$, and in round $\lceil \log \Delta \rceil$ with probability $1/2^{\lceil \log \Delta \rceil - 1}$ (due to the capping). Let $2 \leq d_S(v) \leq \DeltaEst$ be the number of sender neighbors of $v$. Note that if $d_S(v) = 1$, the lemma holds trivially. Then, in round $j = \lceil \log d_S(v) \rceil$, for which $v$ is awake and $1/2^j \in [1/(2d_S(v)),1/d_S(v)]$, the probability that there is exactly one sender neighbor of $v$ transmitting in round $j$ is (by summing the probabilities of the mutually exclusive events that a given sender neighbor transmits alone) at least 
    
    $$ \sum_{c=1}^{d_S(v)} \frac{1}{2^j} \left(1-\frac{1}{2^j}\right)^{d_S(v)-1} \geq d_S(v) \cdot \frac{1}{2 d_S(v)} \left(1-\frac{1}{d_S(v)}\right)^{d_S(v)-1} \geq \frac{1}{8} $$
    where the last inequality follows from $1-x \geq (1/4)^x$ for $x \in [0,\frac{1}{2}]$. Since $v$ listens in all rounds (unless it has already heard a message), $v$ hears a message in round $j$ or earlier. 
    
    Finally, since the randomness used by the successive backoff iterations are independent, it follows that $v$ does not hear any of its sender neighbor in $k$ backoff iterations with probability at most $(7/8)^{k}$, or equivalently, learns that it has at least one sender neighbor with probability at least $1 - (7/8)^{k}$. 
\end{proof}

\end{document}